\documentclass[8pt,a4paper]{article}

\frenchspacing
\DeclareMathAlphabet{\mathtt}{OT1}{cmtt}{m}{n}
\DeclareFontShape{OT1}{cmtt}{m}{n}
     {%
      <5-8.99>cmtt8<9-9.99>cmtt9%
      <10-11.99>cmtt10%
      <12-25>cmtt12%
      }{}
\RequirePackage{amsfonts}

\hsize30pc\textwidth30pc
\vsize51pc\textheight51pc
\setlength\parindent{12pt}
\newif\iftrimmarks  \trimmarksfalse \trimmarkstrue

\setlength\paperheight{11in}
\setlength\paperwidth{9.5in}
\newdimen\papertrimheight
\newdimen\papertrimwidth
\setlength\papertrimheight{10in}
\setlength\papertrimwidth{7in}

\setlength\oddsidemargin  {77.5pt}
\setlength\evensidemargin {77.5pt}
\setlength\topmargin{28pt}
\setlength\headsep   {0pt}
\setlength\footskip{28pt}
\parskip0pt
\frenchspacing
\emergencystretch10pt
\setcounter{topnumber}{3}

\setcounter{bottomnumber}{3}

\setcounter{totalnumber}{4}

\setcounter{dbltopnumber}{2}

% (UN)COMMENT URLS IN \bibitem{LKTG} FOR THE (UN)ANONUMIZED VERSiON!!!
\newcommand{\anonymize}[2]{#1}
\long\def\anonymizelong #1\endanonymizelong{#1}
%\newcommand{\anonymize}[2]{#2}
%\long\def\anonymizelong #1\endanonymizelong{}

%%%%%%%%% HIGHLIGHTING THE EDITS: %%%%%%%%
%
%
%\newcommand{\blueit}[1]{
%\marginpar{\color{blue} $\Longleftarrow$}
%\begingroup\color{blue}#1\endgroup}
%\newcommand{\bluer}[1]{
%\marginpar{\color{blue} $\Longleftarrow$}
%\begingroup\color{blue}#1\endgroup}
%\newcommand{\bluevar}[1]{
%\begingroup\color{blue}#1\endgroup}
%\newcommand{\move}{
%\marginpar{\color{blue}\tiny moved text}
%}

%%%%%%%%%%%%%%%%%%%%%%%%%%%%%%%%%%%%%%%%%%%%%%

%%%%%%%%% NO HIGHLIGHTING THE EDITS %%%%%%%%%%%%
%\newcommand{\bluer}[1]{#1}
%\newcommand{\bluevar}[1]{#1}
%\newcommand{\move}{}
%\newcommand{\remove}{}
%%%%%%%%%%%%%%%%%%%%%%%%%%%%%%%%%%%%%%%%%%%%%%

%\usepackage[a5paper,left=15mm,right=15mm, top=15mm,bottom=20mm,bindingoffset=0cm]{geometry}
\usepackage{indentfirst}
\usepackage{epigraph}

\usepackage{overpic}
\usepackage{xcolor}
\usepackage{cmap}

\usepackage[english]{babel}
\usepackage{amsfonts,amssymb,amsmath,amscd,amsthm}
\usepackage{enumitem,hyperref}

\usepackage{cases}

\usepackage{pgf,tikz,circuitikz}\usetikzlibrary{arrows}
\usepackage{wrapfig}
\usepackage{geometry}
\geometry{a4paper,top=1.0cm,bottom=3.0cm,left=2cm,right=2.5cm}

\usepackage{titling}

\usepackage{pb-diagram}

\usepackage{color}
\usepackage{xcolor}

%\newcommand{\mscomm}[1]{\begingroup\color{green}\endgroup}

%\usepackage{graphicx}
%\usepackage{epic,eepic}
%!!!
%\graphicspath{{figures/}{fig-sturm/}{geometry/}{fig-perestanovki/}{eps-new/}{./figs/}}
%{./figs/}
%\usepackage[matrix,arrow,curve]{xy}
\usepackage{pdfpages}

\usepackage{makecell}
%\usepackage[breaklinks,pdfstartview=FitH,colorlinks,linktocpage,bookmarks=false]{hyperref}

%\setcounter{secnumdepth}{3}
%\setcounter{tocdepth}{3}

%\makeatletter
%\newcommand{\l@myshrink}[2]{\vspace{-0.4cm}}
%\makeatother

\long\def\comment#1\endcomment{}
\long\def\solutions#1\endsolutions{#1}
\long\def\lktgonly#1\endlktgonly{}
\long\def\iumonly#1\endiumonly{}
%\long\def\solutions#1\endsolutions{}
%\long\def\semifinal#1\endsemifinal{#1}
%\long\def\tmpcomment#1\endtmpcomment{}
\long\def\tmpcomment#1\endtmpcomment{}
\long\def\skipprint#1\endskipprint{#1}
%\long\def\rusonly#1\endrusonly{}

\theoremstyle{theorem}
\newtheorem{theorem}{Theorem}
\newtheorem{lemma}{Lemma}

\newtheorem{example}{Example}
\newtheorem*{theorem2prime*}{Theorem~5$'$}

\theoremstyle{remark}

\theoremstyle{definition}
\newtheorem{definition}{Definition}
\newtheorem*{definition*}{Definition}

\newenvironment {th*}[1]
    {\gdef\thname{#1} \begin{thn}}%
    {\end{thn}}
\newtheorem*{thn}{\thname}

\begin{document}

\title{Feynman checkers: through the looking-glass}
%\title{\ \\[-2.5cm]Feynman checkers:\\[-0.2cm]
%{\Large an elementary introduction to quantum theory}\vspace{-0.4cm}}

\author{\anonymize{F. Ozhegov, M. Skopenkov, and A. Ustinov}{Anonymous authors}}

\date{}
%\date{{https://www.mccme.ru/\~{}mskopenkov/courses/quarks-18.html}}
%\ead{skopenkov@rambler.ru}
%\address{\url{http://skopenkov.ru/courses/quarks-17.html}}

\maketitle

\begin{abstract}
Feynman gave a famous elementary introduction to quantum theory by discussing the thin-film reflection of light. We make his discussion mathematically rigorous, keeping it elementary, using his other idea. The resulting model leads to accurate quantitative results and allows us to derive a well-known formula from optics. In the process, we get acquainted with mathematical tools such as Smirnov's fermionic observables, transfer matrices, and spectral radii. Quantum walks and the six-vertex model arise %naturally 
as the next step in this direction.

\textbf{Keywords and phrases.} Feynman checkerboard, quantum walk, six-vertex model, thin-film reflection, Smirnov's fermionic observable, transfer matrix 
%quantum computation

\textbf{MSC2020:} 81-01, 05A19, %05C38,
%82B20, %11L03, 68Q09, 
%81P68, %81T25, 81T40, 05A17, 11P82, 33C45.
78A40. 
\end{abstract}

%%%%
%\footnotetext{The work was prepared within the framework of the Academic Fund Program at the National Research University Higher School of Economics (HSE) in 2018-2019 (grant N18-01-0023) and by the Russian Academic Excellence Project ``5-100''.}
%%%%

%\end{frontmatter}

%\vspace{-0.8cm}

%\tableofcontents

%\epigraph{Oh, Kitty, how nice it would be if we could only get through into
%Looking-glass House! \dots %I'm sure it's got, oh! such beautiful things in it! Let's pretend there's a way of getting through into it, somehow, Kitty. 
%Let's pretend the glass has got all soft like gauze, so that we can get through \dots %Why, it's turning into a sort of mist now, I declare!
%%It'll be easy enough to get through---"
%}{L. Carroll, Through the Looking-glass and What Alice Found There}

In his famous lectures~\cite{Feynman}, Feynman gave an elementary introduction to quantum theory by discussing the reflection of light by glass. The purpose of this paper is to make his discussion mathematically rigorous, keeping it elementary. This leads us to accurate quantitative results and allows us to derive a well-known formula from optics, given e.g. in another famous book~\cite{Landavshits-08}; 
%Landau--Lifshitz--Pitaevskii lectures
 see Theorem~\ref{th-reflection} below.
%; see,~e.g.,~\cite{Landavshits-08}.

Feynman's model is simple and we could start playing this ``game'' immediately.
%Feynman's model --- or, better, game --- is simple and we could start playing it immediately. 
Light entering the glass turns into a checker, just like Alice turns into a pawn in a known novel. But we prefer to start with a summary of one of Feynman's lectures (Sections~\ref{experiment}--\ref{theory}) to catch what it is all about. After that, we define a %underlying 
mathematical model (Section~\ref{definitions}) and prove the desired formula (Sections~\ref{Theorem}--\ref{proofs}). This brings us to other related models that are trendy now --- quantum walks \cite{Kempe-09, \anonymize{KSS-23, SU-22,}{} Venegas-Andraca-12} %, the $O(n)$ 
and the six-vertex model \cite{Duminil-Copin-etal-22} (Section~\ref{Problems}). Finally, with the most patient readers, we dive into technicalities postponed before (Appendix~\ref{appendix_correct}). Still, some natural questions will %only be answered by references or remain open.
remain open or answered by references only.

\section{Experiment.} \label{experiment}

%\mscomm{Summarize the first lecture from \cite{Feynman} in Sections 1--2.}
\epigraph{%Nobody ever figures out what life is all about, and it doesn't matter. Explore the world. 
Nearly everything is really interesting if
you go into it deeply enough.}{R.P. Feynman}

First, we talk about an experiment and its surprising results. This is a thought and idealized experiment, and we omit the important technical conditions required to carry it out in practice (see~\cite{Feynman}).

We need a few devices. The first one is a \emph{light source}. We need a rather special source, such as a laser, that emits light of just one particular color, say, red.

Our next device is a \emph{light detector}, or \emph{photomultiplier}. It produces a sound (click) when hit by light. For bright light, such clicks sum up to a constant noise. The photomultiplier is very sensitive, and it reacts even to weak light. If we decrease light intensity, then the clicks do not become quieter but rather happen less often. For very feeble light, we hear separate clicks with greater intervals between them. 

This is already a remarkable result: it demonstrates that \emph{light consists of particles}. Each particle of light called \emph{photon} carries some minimal amount of energy affecting the detector. For weaker light, there are fewer particles, but the energy of each one remains the same.

The last participant of our experiment is a piece of glass. We know that glass reflects light partially: looking at the window at night, we see both the objects outside and the reflection of a lamp inside. % in the room. 
This means that light partially passes through the glass and is partially reflected. Roughly 4\% of incident light is reflected. %\mscomm{Add the figure from the \href{https://drive.google.com/file/d/1gXVN5dGr6hhPB7jgIxnM_2k6OkaQS9r4/view}{video}, 9:30} 
In Figure~\ref{fig:exp1} to the left, one detector detects reflected light, and the other one (inside the glass) detects transmitted light. For weak enough light, they detect single photons. If photons are emitted one at a time, then the left detector clicks in 4\% cases and the right one clicks in 96\% cases; they never click simultaneously. As always, we neglect other effects such as small absorption of light. 
\begin{figure}[t]%[h]
\centering
\begin{tabular}{cc}
    &\begin{overpic}[width=0.4\linewidth]{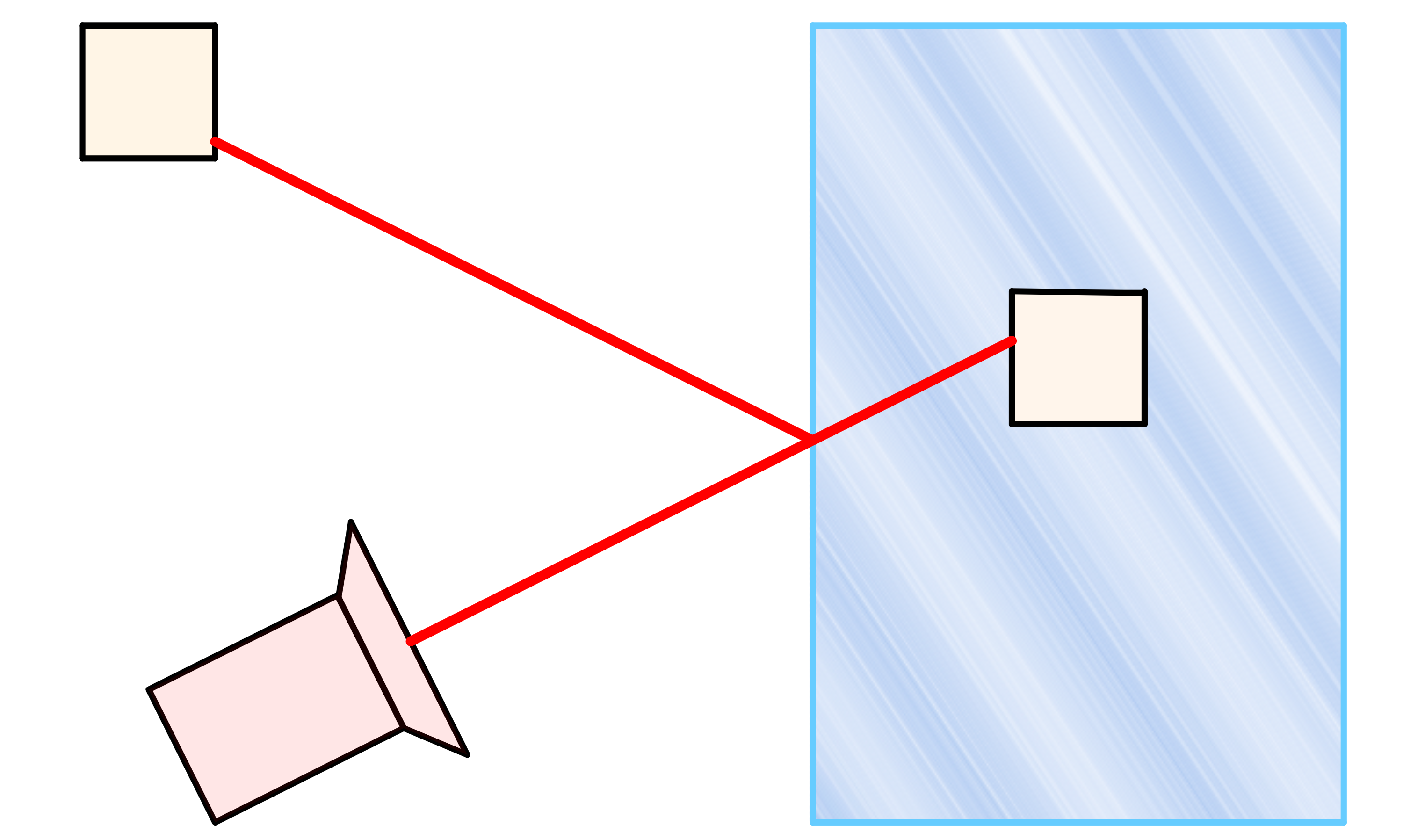}
    \footnotesize{
        %% light source
        \put(11, 5){\rotatebox{28}{\makecell{Light\\source}}}
        %% Detectors
        \put(8.5, 52){A}
       \put(2, 59.5){Detector}
       \put(75, 33){B}}
        \put(68, 40.5){Detector}
        %% Glass
        \put(70, 4.5){Glass}
        %%  % of light
        \put(62, 26){$96\%$}
        \put(17, 51){$4\%$}
        \put(31, 11){$100\%$}
    \end{overpic}
    \hspace{2cm}
    %\includegraphics[width=0.4\linewidth]{pictures/feyn2.pdf}
    %\hspace{-1.5cm}
    \begin{overpic}[width=0.345\linewidth]{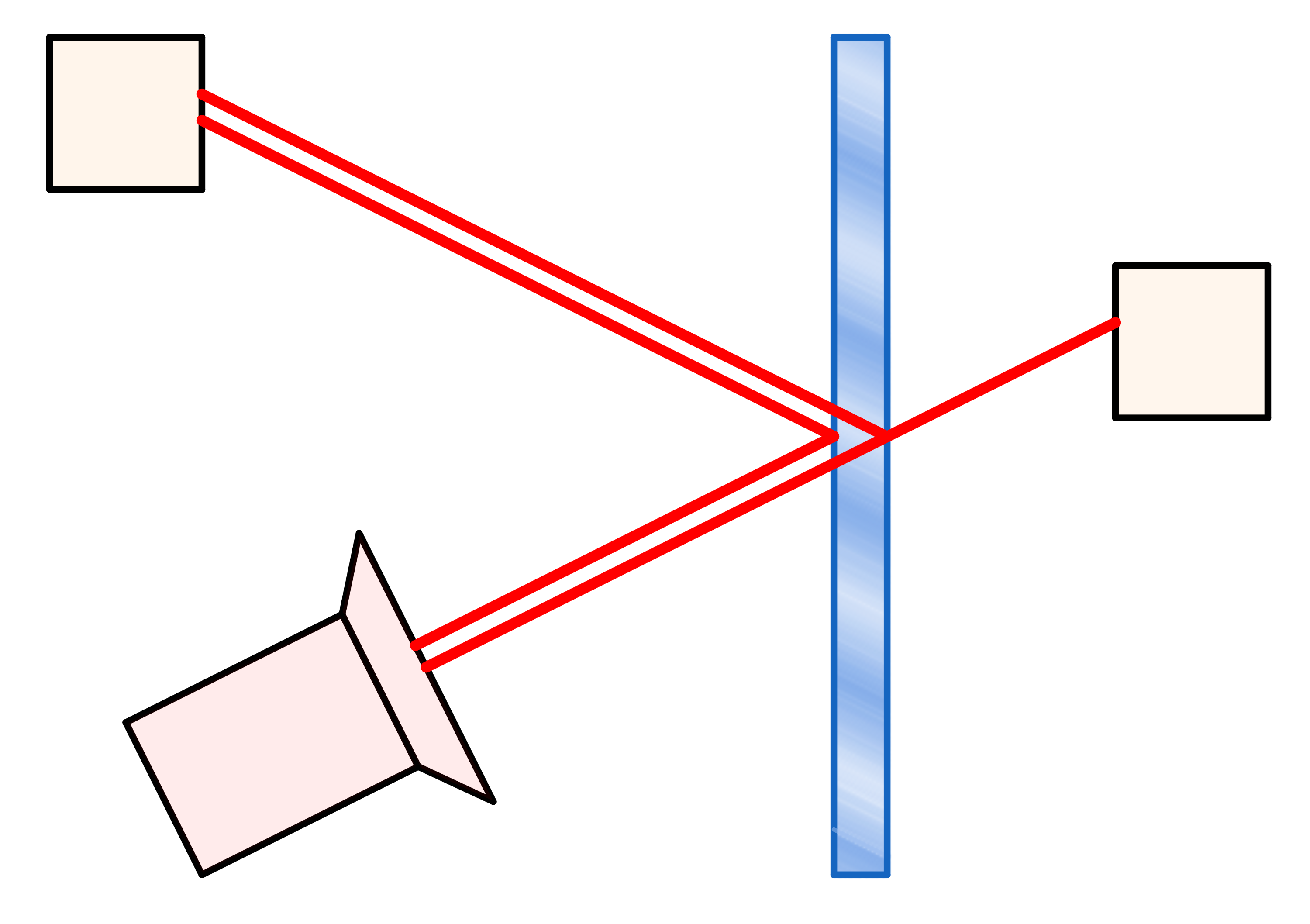}
    \footnotesize{
        %% light source
        \put(11, 9){\rotatebox{28}{\makecell{Light\\source}}}
        %% Detectors
        \put(7.5, 60){A}
        \put(-1, 69.25){Detector}
        \put(89, 43){B}
        \put(81.5, 51.5){Detector}
        %% % of light
        \put(19.5, 62.5){$0\%$ to $16\%$}
        \put(72, 32){$100\%$ to $84\%$}
        \put(35, 15.5){$100\%$}
        }
    \end{overpic}
\end{tabular}
%    &\begin{overpic}[width=0.37\linewidth]{pictures/fig1.png}
%    \footnotesize{
%        %% light source
%        \put(6, 4.5){\rotatebox{28}{\makecell{Light\\source}}}
%       %% Detectors
%        \put(3.75, 56.25){A}
%        \put(-3.5, 64.5){Detector}
%        \put(76.5, 36){B}}
%        \put(69,43.5){Detector}
%        %% Glass
%        \put(72, 2){Glass}
%        %%  % of light
%        \put(62.5,28.5){$96\%$}
%        \put(14, 55){$4\%$}
%        \put(28, 12){$100\%$}
%    \end{overpic}
%    \hspace{2cm}
%    %\includegraphics[width=0.4\linewidth]{pictures/feyn2.pdf}
%    %\hspace{-1.5cm}
%    \begin{overpic}[width=0.34\linewidth]{pictures/fig1_2.png}
%    \footnotesize{
%       %% light source
%        \put(8.75,6 ){\rotatebox{28}{\makecell{Light\\source}}}
%        %% Detectors
%        \put(5.5, 59){A}
%        \put(-2.75, 68){Detector}
%        \put(90, 41){B}
%        \put(81.5, 50.5){Detector}
%        %% % of light
%        \put(18,61){$0\%$ to $16\%$}
%        \put(72,30){$100\%$ to $84\%$}
%        \put(34, 14){$100\%$}
%        }
%    \end{overpic}
%\end{tabular}
    \caption{(Left) An experiment to measure the partial reflection of light by a single surface of glass. About $4\%$ of light is reflected and hits Detector A while the rest $96\%$ is transmitted and hits Detector B. %Here A and B are photomultipliers. %(Middle) Holes and spots theory. 
    (Right) %An experiment to measure the partial reflection of light by two surfaces of glass. 
    A similar experiment with two reflective surfaces. Depending on the thickness 
    of the glass, from $0\%$ to almost $16\%$ of light is reflected and hits Detector A. The additional surface can ``turn off'' or ``amplify'' reflection. This is a challenge for any reasonable theory.}
%\focomm{removed a picture for spots theory.}}
    \label{fig:exp1}
\end{figure}

This simple experiment is suggestive and rather hard to explain. How does the photon know if it should be reflected or transmitted? How does it happen that always 4\% of photons are reflected? %with high accuracy? 
The reader is cordially invited to pause here for a moment and come up with an explanation.
%of this experiment.

So, we proceed. One possible explanation could be that there is a hidden mechanism inside the photon, a kind of sheaves and gears, which determines if the photon is going to be reflected or not. In other words, there are two types of photons: 4\% are %the ones 
predisposed to be reflected, and the rest are %the ones 
predisposed to pass. %This theory is natural but does not survive further experimental checks. The reflected light can be directed to a second piece of glass. Then one expects that photons filtered out by the first reflection will now all be reflected. But actually, only 4\% of the reflected photons are reflected a second time. 

Another possible explanation could be that there are small spots reflecting light on the surface of the glass, and the rest are holes allowing light to pass. At first sight, this looks reasonable: if 4\% of the surface are spots, then we get the 4\% reflection.

Now we discuss an experiment that disproves these and any other reasonable theory of partial light reflection.  
%\mscomm{Here: the rest of the lecture from
%\url{https://drive.google.com/file/d/1gXVN5dGr6hhPB7jgIxnM_2k6OkaQS9r4/view}, starting from 19:47.}
%\focomm{start of the rough transcript of the lecture}
%The experiment is very simple: 
We just added a second reflective surface to our glass. That is, instead of a huge piece of glass (see Figure~\ref{fig:exp1} to the left), we take a thin glass with strictly parallel surfaces (see Figure~\ref{fig:exp1} to the right). The photon can now be reflected by the front surface or the back one. With already two theories of reflection at hand, let us %apply them to 
make predictions.

%Let us start with 
The theory of reflective spots says that 4\% of the front surface are reflective spots. Assume that the same holds for the back surface. Then, after 4\% of light was reflected by the front surface, the rest reached the back surface, and 4\% of it was reflected (that is 4\% out of 96\% of transmitted light, or, roughly 4\%). In total, the theory %of reflective spots gives us that
predicts roughly 8\% of light to be reflected by two surfaces. 

%Now, let us consider 
The theory of hidden mechanism % What percentage of reflected light will it suggest? It turns out to be 4, 
predicts 4\% of light to be reflected in total, because the photons that could reach the back surface, according to the theory, are all predisposed to pass. So, we have two competing theories that predict different values. 

\begin{figure}[b]
    \centering
    \begin{overpic}[width=0.9\linewidth]{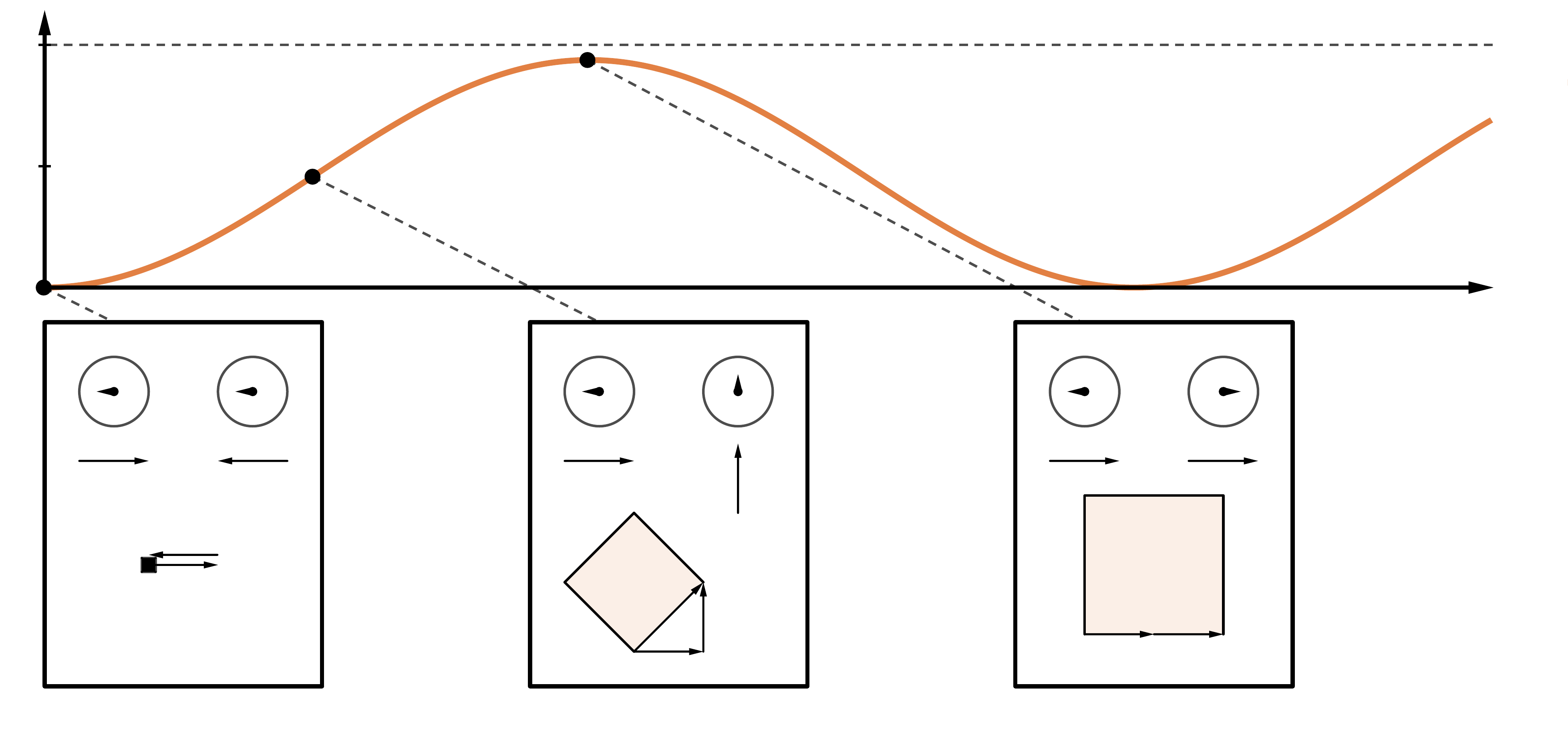}
    \footnotesize{
        %% y-axis
        \put(-2, 46.5){$\%$ of reflected light}
        \put(-0.5, 43){$16$}
        \put(0.5, 35.25){$8$}
        %% x-axis
        \put(94,25){Thickness} % of glass}
        %% 0% reflection
        \put(5.75, 17.75){$0.2$}
        \put(15, 17.75){$0.2$}
        \put(5, 24.25){front}
        \put(14, 24.25){back}
        \put(10, 11.75){$0.2$}
        \put(10, 8.5){$0.2$}
        \put(5.5, 10){$0\%$}
        %% 8% reflection
        \put(36.5, 17.75){$0.2$}
        \put(47.5, 15.25){$0.2$}
        \put(36, 24.25){front}
        \put(45, 24.25){back}
        \put(45, 6.25){$0.2$}
        \put(41.5, 3.5){$0.2$}
        \put(39.75, 8.5){$8\%$}
        %% 16% reflection
        \put(67.75, 17.75){$0.2$}
        \put(76.5, 17.75){$0.2$}
        \put(67, 24.25){front}
        \put(76, 24.25){back}
        \put(70, 4.5){$0.2$}
        \put(74.75, 4.5){$0.2$}
        \put(71.75, 10){$16\%$}
        }
    \end{overpic}
    \caption{(Top) The percentage of light reflected by the glass, depending on its thickness. %\mscomm{Let us be honest and make the maximum equal to 15\%.} 
    (Bottom) The theory explaining the plot is illustrated for three values of the thickness: the stopwatches for front and back reflection paths, front and back reflection arrows, and length squares of their sums are shown.} %\mscomm{Unify with the plot (Figure~\ref{fig:exp2})}}
    \label{fig:exp2}
\end{figure}

%Having two competing theories 
Let us discuss what the experiment says. %, which is obtained, of course, only under carefully controlled conditions. %In advance, let us say that both predictions are in some sense correct. 
In Figure~\ref{fig:exp2} we plot the percentage of reflected light, depending on the thickness of the glass (see \cite[Figure~5]{Feynman}). %\mscomm{Width $\to$ thickness everywhere? Plate $\to$ glass?} %If we start with 

It turns out that for extremely thin layers of glass, %it turns out that 
light is not reflected at all (or the reflection is very small). This result alone is a powerful blow to both theories suggested above. %The miracle is that photons are practically not reflected, they all pass through. 
How does a photon hitting the front surface know about the back one, to decide if it should be ever reflected or not? 
%that it does not need to be reflected if it is reflected in 4\% of cases in the absence of the back surface? %We immediately see that 
The theory of holes and spots completely fails in this case.  

As long as we gradually increase the thickness of the glass, the percentage of the reflected light slowly increases. For some thickness, it reaches the 4\% predicted by the theory of hidden mechanism. But if we increase the thickness of the glass further, the percentage continues to grow, and, at some point, we observe the 8\% predicted by the reflective spots theory. It would seem that we have reached the maximum, but the percentage of reflection increases until it reaches as much as almost 16\%. %However, then a very strange thing happens. 
After reaching the maximum, the percentage %of reflection begins to 
starts decreasing, until it becomes close to zero again for some thickness. Then the same behavior repeats with high accuracy again and again. This unbelievable phenomenon is applied to accurately measuring large distances in a device called \emph{interferometer}. 

One may argue that this experiment just demonstrates the wave nature of light,
%. Light is waves, so it moves by wave laws, 
and wave optics explains the results. % perfectly. %This objection is the reason 
This is why we started %this section 
%with the mention of the photomultiplier device, which 
with irrefutable proof that light consists of particles.
This disagrees with wave theory, % (when we shine a weak light on a photomultiplier, it reacts with clicks of the same strength, but with a large gap between them). So, we have to accept as an unshakable postulate that light consists of particles and 
%Thus one cannot just say that the wave optics solves all the problems. 
%As a result, we have to come up with some kind of theory of light particles that could describe this behavior. Let's present this theory. The following is a physical theory that tries to explain this phenomenon.
and we need a completely new one. % to explain this experiment.

%\focomm{end of rough transcript of the lecture}

\section{Theory.} \label{theory}

\vspace{-0.3cm}

\epigraph{%What I am going to tell you about is what we teach our physics students in the third or fourth year of graduate school... 
It is my task to convince you not to turn away because you don't understand it. \dots
%You see my physics students don't understand it... That is because I don't understand it. 
Nobody does.}{R.P.~Feynman \cite{Feynman}}

Let us introduce a rough 
theory trying to describe the behavior of the reflection percentage.

%Unfortunately, 
This theory will not explain why the behavior is the one we observe. %answer the question "why?". 
It simply predicts the outcome of the experiment. %The reason 
We are interested in it because it is a prototype of the entire quantum theory. Something similar happens in %absolutely 
any quantum model. %Despite looking paradoxical, it agrees with the experiment. 
Despite the theory may look paradoxical, it agrees with the experiment. %%Let us proceed to this theory.

The theory cannot predict whether a given photon will be reflected from some surface or not. All that can be predicted is %only 
the reflection \emph{probability}. %Therefore, all that our theory can give in the end is probability. 
So, given the thickness of the glass as an input, our theory outputs the reflection probability. This is done in several steps shown in Table~\ref{alg}. Let us explain the calculation from the end to the beginning.
%Therefore, let's discuss sequentially, how this calculation happens. We start from the end: that is, we will discuss how to obtain a probability using some auxiliary object, then how to obtain this auxiliary object, and so on. 
This way we will become more familiar with the basic ideas of quantum theory, starting with the most important ones, and not related to a specific experiment. 

\begin{table}[htb]%[!t]
\caption{%An ``algorithm'' 
A recipe for computing the reflection probability for the glass of a given thickness.} %\mscomm{Explain $\pm$, rotation speed, make the arrows small --- everything just in the figure in the lecture}}
\label{alg}
\begin{tabular}{ll}
     \hline
     \begin{tabular}{l}{thickness of the glass}
     \end{tabular}
     &  \\
     \multicolumn{1}{c}{$|$} &  \\[-0.32cm]
     %\begin{tabular}{c}
     %     front \\
     %     back
     %\end{tabular}
     \begin{tabular}{l}
     \multicolumn{1}{c}{$\downarrow$}\\
     back/front reflection arrow \\
     \multicolumn{1}{c}{$|$}
     \end{tabular}
     &   
     $\left\{ \begin{tabular}{llp{0.45\textwidth}}
     length & $=$ & $0.2$\\
     direction &$=$ & $+/-$ final direction of the stopwatch hand that makes $15000$ turns per $1$cm of photon's path %where the sign ``$+$''/``$-$'' is taken for back/front reflection respectively 
     \end{tabular}\right.$ \\[-0.32cm]
     \multicolumn{1}{c}{$\downarrow$} & \\                        
     \begin{tabular}{l}reflection arrow
     \end{tabular}
     & $=$  \begin{tabular}{l}
     front reflection arrow $+$ back reflection arrow
     \end{tabular} \\      
     \multicolumn{1}{c}{$\downarrow$} &  \\ 
     \begin{tabular}{l}reflection probability
     \end{tabular}
     & $=$ \begin{tabular}{l}
     length square of the reflection arrow 
     \end{tabular}\\
     \hline
\end{tabular}
%%%%%%%%%%%%%%%%%%%%%%%%%%%%%%%%%%%
%$$
%\xymatrix{
%\text{thickness of the glass}&&
%\ar[d]&&\\
%&\text{single-reflection arrow } \ar[d] & \{&\text{\makecell{length = 0.2\\
%direction = $\pm$ direction of the hand of imaginary stopwatch }}%\mbox{\begin{dcases}
%    %\text{length} = 0.2\\
%    %\text{direction = \pm direction of the hand of imaginary stopwatch}
%%\end{dcases}} 
%\\
%&\text{\makecell{reflection\\
%arrow}} 
%\ar[d]&
%=& \text{\makecell{
%front reflection arrow}} + \text{\makecell{
%back reflection arrow}}
%\\
%& \text{\makecell{reflection\\
%probability}} &=&
%\text{\makecell{length square of the reflection arrow}}  
%}
%$$
\end{table}

%In general, 
All we need to do is draw small arrows on paper. % and attach them.
According to our theory, %the probability of any event is always calculated like this. E
to each event we assign a certain arrow on the plane. The probability of the event is always calculated as the square of the length of the arrow. One can speak of vectors or complex numbers here, but we prefer the word ``arrow''.

Recall that in the first experiment, we had only one reflecting surface and the reflection probability was 4\%. This means that the arrow associated with the reflection event has a length square of $0.04$, hence the length %of this arrow 
is $0.2$. The particular numbers $0.2$ and 4\% here depend on the material. %, but the computation receipt is general.

Now we have
%need to define a reflection arrow in case of reflection by 
two surfaces. Let us define the reflection arrow as the sum of the two arrows: the front reflection arrow and the back reflection arrow. So the rules are such that if we have two events that happen in mutually exclusive ways, that is, we are either reflected from the front or the back surface, then we need to sum up the arrows for these events. This rule is somewhat similar to probability theory. 

It remains to compute the back (and front) reflection arrow. %\blueit{It will depend on the photon's \emph{path}.} 
The length of the arrow for one reflective surface should be $0.2$ so that the square is 4\%. But what is its direction? There are some %pretty 
tricky rules at work here. The direction depends on the photon's \emph{path} as follows.
Take an unusual stopwatch with one hand, which rotates at great speed, making approximately 15,000 turns per %every 
centimeter of the photon's path. (The rotation speed is determined by the color of the light.) When the photon leaves the source, we start this stopwatch. It rotates until the photon reaches the detector. As soon as this happens, we stop our stopwatch. % and look at the final position of its hand. 
The final direction of the stopwatch hand is the desired direction of the back reflection arrow. For the front reflection, the rule is the same with one modification:
%we start the stopwatch again \blueit{for the different path}, and this time 
we take the \emph{opposite} of the final direction of the %stopwatch 
hand (see \cite[Figure~68d]{Feynman} for an explanation). 
%The direction of the front reflection arrow is opposite to the final direction of the stopwatch hand. 
We have finished the theory. 

\begin{figure}[t]
\centering
\hspace{7mm}
    \begin{overpic}[width=0.4\linewidth]{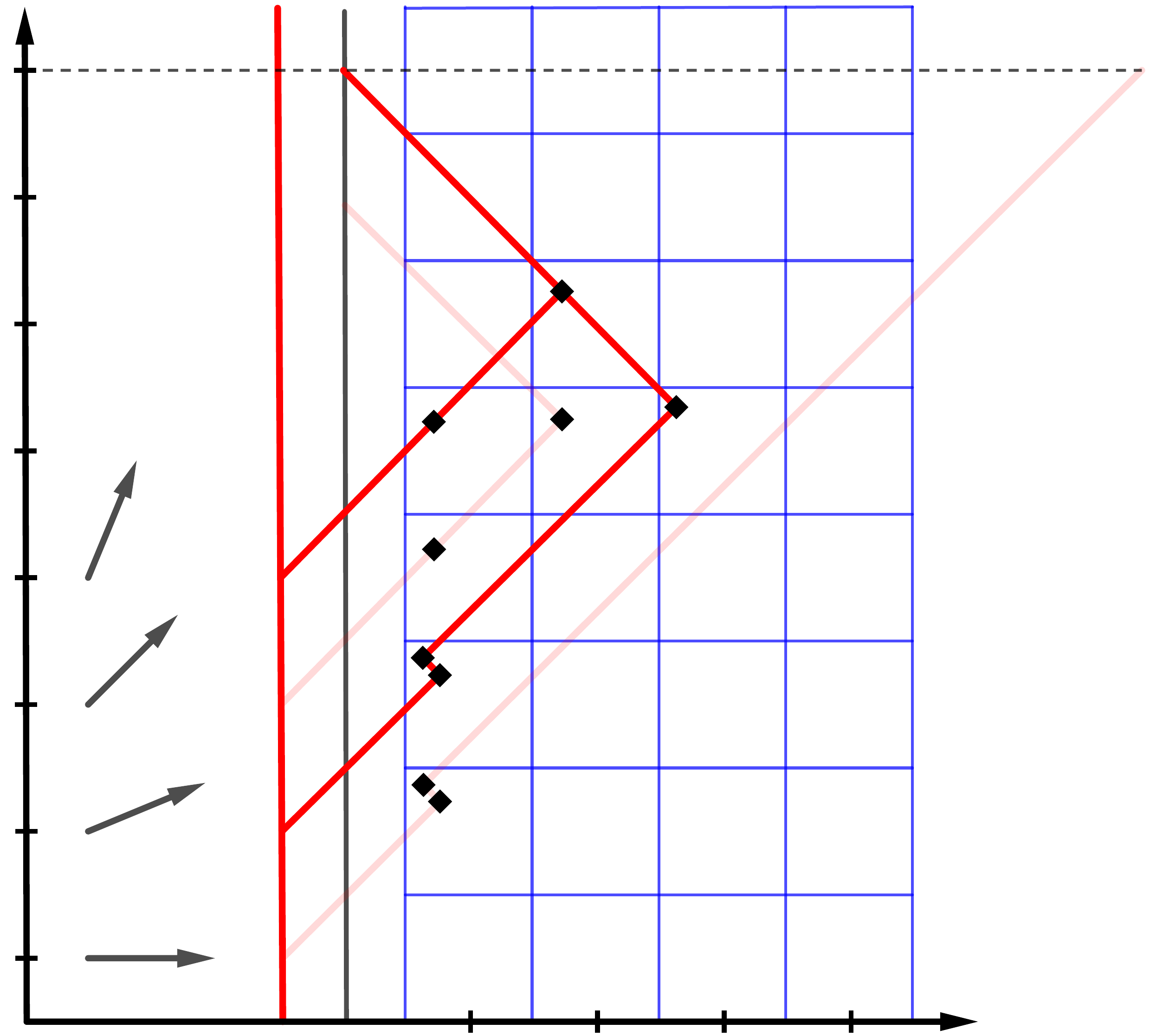}
    %% y-axis
    \footnotesize{
    \put(-1, 90){Time}
    \put(-2 ,81){$0$}
    \put(-6, 71){$-\varepsilon$}
    \put(-8, 60){$-2\varepsilon$}
    \put(-8, 49.75){$-3\varepsilon$}
    \put(-8, 38.5){$-4\varepsilon$}
    \put(-8, 28){$-5\varepsilon$}
    \put(-8, 17){$-6\varepsilon$}
    \put(-8, 5.5){$-7\varepsilon$}
    %% Amplitudes
    \put(-1,-4){\footnotesize{Initial arrows}}
    %% x-axis
    \put(29, -3){$0$}
    \put(40, -3){$\varepsilon$}%
    \put(49.5, -3){$2\varepsilon$}
    \put(60.5, -3){$3\varepsilon$}
    \put(71.5, -3){$4\varepsilon$}
    \put(81, -3){Space}
    %% underbrace + glass
    \put(34.75,86.25){$\overbrace{\phantom{0,\varepsilon, 2\varepsilon, 3\varepsilon, 4\varepsilon,5\varepsilon, 6\varepsilon 7}}$}
    \put(38, 94){glass of thickness $L$}
    %% Monochromatic source
    \put(19.1, 24){{\rotatebox{90}{Monochromatic source}}}
    %\put(18.75, 24){{\rotatebox{90}{Monochromatic source}}}
    %% Detector
    \put(25.6, 47){{\rotatebox{90}{Detector A}}}
    %\put(25.25, 52){{\rotatebox{90}{Detector}}}
    %% s_0 and s_1
    \put(30.5, 28){$s''$}
    \put(30.5, 39){$s'$}
    \put(30.5, 50){$s$}
    }
    %% Triangles
    \put(23, 7.5){\rotatebox{-45}{$\blacktriangle$}}
    \put(23, 18.5){\rotatebox{-45}{$\blacktriangle$}}
    \put(23, 29.75){\rotatebox{-45}{$\blacktriangle$}}
    \put(23, 40.75){\rotatebox{-45}{$\blacktriangle$}}
    %% now triangles on \tau=0
    \put(28, 86){\rotatebox{-135}{$\blacktriangledown$}}
    %\put(75.5, 83){\rotatebox{135}{$\blacktriangledown$}}
    \put(96.75, 83.25){\rotatebox{135}{$\blacktriangledown$}}
    \put(28, 74.25){\rotatebox{-135}{$\blacktriangledown$}}
    \end{overpic}
    \hspace{1.5cm}
    \begin{overpic}[width=0.24\linewidth]{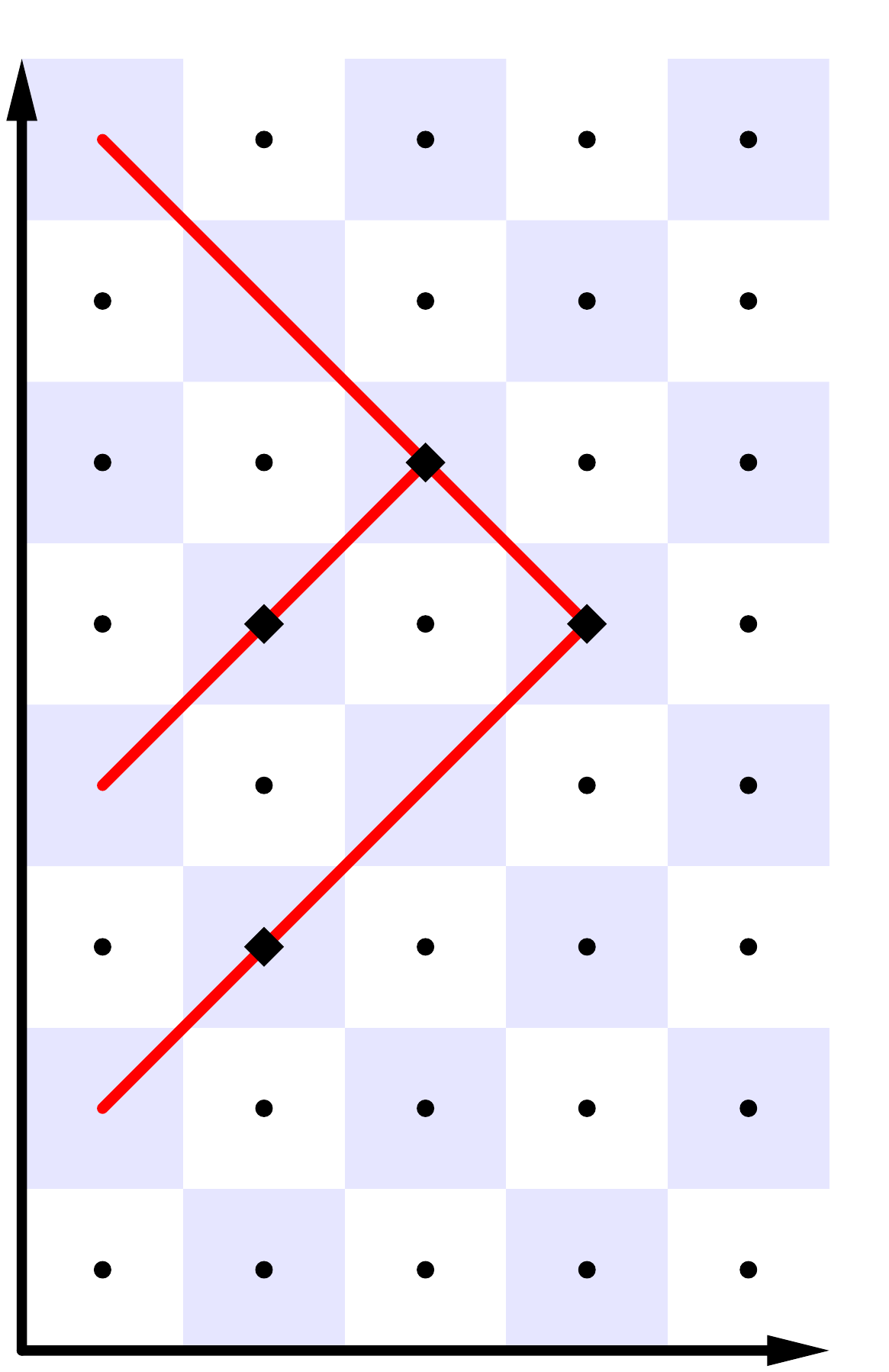}
     %% y-axis
     \footnotesize{
    \put(-2, 97){Time}
    \put(-3, 88.5){$0$}
    \put(-7, 77){$-\varepsilon$}
    \put(-9, 65){$-2\varepsilon$}
    \put(-9, 53){$-3\varepsilon$}
    \put(-9, 41){$-4\varepsilon$}
    \put(-9, 29.5){$-5\varepsilon$}
    \put(-9, 17.5){$-6\varepsilon$}
    \put(-9, 5.5){$-7\varepsilon$}
     %% x-axis
    \put(6, -3){$0$}
    \put(18, -3){$\varepsilon$}%
    \put(28, -3){$2\varepsilon$}
    \put(40, -3){$3\varepsilon$}
    \put(52, -3){$4\varepsilon$}
    \put(60, -3){Space}
    %% s_0 and s_1
    \put(37, 45){$s''$}
    \put(26, 57.5){$s$}
    %% number 2 in path s_0
    \put(17, 33){$\mathbf{2}$}
    %% overbrace
    \put(13, 93){$\overbrace{\phantom{0,\varepsilon, 2\varepsilon, 3\varepsilon, 4\varepsilon,5\varepsilon, 6\varepsilon 7}}$}
    \put(35, 101){$L$}
    %\put(0,0){\rotatebox{15}{$\blacktriangledown$}}
    }
    \put(5.75, 20){\rotatebox{-45}{$\blacktriangle$}}
    \put(5.5, 43.25){\rotatebox{-45}{$\blacktriangle$}}
    \put(5.2, 91.9){\rotatebox{-135}{$\blacktriangledown$}}
    \end{overpic}
    \vspace{3pt}
    \caption{(Left) Reflection as a result of scatterings. The left vertical line (red) depicts a \emph{monochromatic source} %--- a device 
    that emits photons in a special predictable way. Their initial arrows %of the photons 
    at four moments are shown to the left from the source.
    %Arrows %to the left of the source illustrate 
    %depict the amplitudes of the emitted photons. % at the moment $T_i$. 
    The blue thin grid stands for the glass and red thick lines illustrate possible light paths. We %work in a system of 
    use units such that the speed of light is $1$ so that the slope of thick lines is $\pm 1$. Black rhombi depict scatterings. The vertical gray line is detector~A. (Right) A formalization of the left picture. Black points are the centers of the thin blue squares to the left. %depict nodes of the lattice dual to the blue grid to the left. 
    The bold number ``$\mathbf{2}$'' %at a lattice point 
    means %that there are 
    two scatterings at the same point. In general, the number of scatterings can be arbitrary. We have added such a possibility %of multiple scatterings 
    to the original Feynman model to make it compatible with classical optics. See Definition~\ref{def-main}. %and formula \eqref{a+-} \mscomm{why \eqref{a+-}?}
    }
    \label{Feynman_pic}
\end{figure}

Let us illustrate it with the example of a very thin layer of glass. See Figure~\ref{fig:exp2} to the bottom-left.
%to calculate the outcome of our theory. %First of all, we must construct an arrow of one reflection, the front one. We have to 
To compute the arrow of the back reflection, start the stopwatch when the photon leaves the source and do 15,000 turns per centimeter of its path. At the end of the path, the stopwatch hand has been fixed at some position. This is the direction of the back reflection arrow,
%We have to take its direction with a minus sign.
and the length %of the arrow 
is set to be $0.2$.  %Then, we need 
To compute the arrow of the front reflection, run the stopwatch again while the photon is moving. Since our layer is extremely thin, 
% even by the standards of ultra-fast rotation of the arrow,
%, we are talking about some fractions of millimeters, 100 thousandths, 
the final position of the stopwatch hand %when the photon hits the detector, reflecting from the back wall, 
is almost the same as for the back reflection. %The angle between them is negligible 
%For the front reflection, 
But this time we need to reverse the direction.
%of the stopwatch hand.
Therefore, 
%the arrow of the front reflection is nearly opposite to the arrow
%we must draw the final direction of the arrow of our stopwatch for the back reflection arrow, but without multiplying by -1. If we sum up the two obtained arrows and calculate 
the reflection arrow %, then we get 
is the sum of two arrows of the same length and almost opposite directions. 
%In total, we obtain 
We get a very small arrow and its length square is even smaller. %That is, 
So, our theory indeed %shows 
reproduces the result that the reflection probability is small %close to zero 
for a very thin layer. 

For thicker glass, the front reflection arrow remains the same but the back one acquires an additional rotation angle, and the probability increases. See Figure~\ref{fig:exp2} to the bottom. 

The reader is cordially invited to play with this theory further and see that it indeed explains the plot in Figure~\ref{fig:exp2}, as well as beautiful phenomena like rainbow patterns on soap bubbles \cite{Feynman}.

%
%Now, if we increase the thickness of the glass, the length of the photon's path for the back reflection will become longer. A photon that is reflected by the back surface will travel an increasingly long path. In this case, our stopwatch hand will have time to turn by some amount compared to the direction of the photon reflected by the front surface. So, we observe that there will already be an obtuse angle between our two front and back reflection arrows, but not 180 (see Figure \ref{fig:model}). This angle will decrease with the growth of the thickness of the glass. So, the resulting reflection arrow, equal to the sum of two, will grow in length, and its square of length will also grow. This will take place until the thickness is adjusted so that the arrow for the clock hand for the front and the back reflection are directed in the opposite direction. That is, when our stopwatch makes an additional half turn compared to the front reflection. Then the corresponding arrows for reflection will be directed in one direction at all. Therefore, their sum will be an arrow of length 0.4, so its length's square will give 16\%. So the behavior of the reflection probability in Figure~\ref{fig:exp2} is now explained by our theory.

We need a more accurate model for a more accurate and deeper explanation of partial reflection of light. See Figure~\ref{Feynman_pic} to the left. The modifications are as follows. 
%We need the following modifications.

First, the arrow does not really rotate while the photon moves. Instead, the \emph{initial direction} of the arrow periodically depends on the moment when the photon is emitted. The rotation frequency $\omega$, known as the \emph{frequency of emitted light}, is determined by the color and constant for our source.
%is a property of the \emph{source}. 
Photons following paths of different lengths and hitting the detector at the same moment were emitted at distinct moments: thus the rotation of the initial arrow leads to the same result as the rotation during the motion. %\mscomm{The rotation frequency equals the light frequency equals color!}

Second, the reflection does not really happen on the surface of the glass; it is %rather
a result of scattering of light \emph{inside} the glass. A scattering is an absorption of light followed by an immediate emission in an arbitrary direction {(possibly, without changing the direction). Each scattering contracts the arrow %by the same factor 
and rotates it clockwise through the right angle. To describe this process, we decompose the glass of thickness $L$ into layers of small width $\varepsilon$, ignore the coordinates parallel to the layers, and consider time slices with increment $\varepsilon$. This corresponds to decomposing two-dimensional spacetime into squares $\varepsilon\times \varepsilon$. 
A photon can be scattered several times, even in the same square $\varepsilon\times \varepsilon$. 
We have added the possibility of multiple scatterings to the original Feynman model from \cite[Figure~68]{Feynman} to make it consistent with classical optics \cite[\S86]{Landavshits-08}. %(otherwise, we would get an incorrect dependence of the reflection probability on the film thickness in the final formula). 

In the following mathematical formalization, we assume that all the scatterings happen only at the centers of the squares $\varepsilon\times \varepsilon$. The resulting light paths resemble moves in the game of checkers, just like in another model introduced by Feynman --- \emph{Feynman checkers} (or \emph{checkerboard}); see \cite{Feynman-Hibbs\anonymize{, SU-22}{}}.

%3. An arbitrary number of scattering inside a volume element is equivalent to one scattering.

\section{Definition.}
\label{definitions}

\epigraph{
%Not that the rules are so difficult --- 
It's like playing checkers: the rules are simple, but you use them over and over.
}
{
R.P. Feynman~\cite{Feynman}
%, QED: The strange theory of light and matter.
}

It is time for a precise and self-contained definition of the mathematical 
model of thin-film reflection.

\bigskip
\begin{definition} %\mscomm{Illustrate main concepts and steps of the proof with figures!} 
\label{def-main}
See Figure~\ref{Feynman_pic} to the right. Fix $\omega,m,L,\varepsilon>0$ called \emph{frequency}, \emph{scattering strength}, \emph{film thickness}, and \emph{lattice step} respectively. %Consider the lattice $\varepsilon\mathbb{Z}^2=\{\,(x,t):x/\varepsilon,t/\varepsilon\in\mathbb{Z}\,\}$.
Take the lattice $\varepsilon\mathbb{Z}^2=\left\{\,(x,t):\frac{x}{\varepsilon},\frac{t}{\varepsilon}\in\mathbb{Z}\,\right\}$. Assume %that 
$m\varepsilon<1$. 

A \emph{light path} $s$ is a finite sequence of lattice points such that the vector from each point (except the last one) to the next one is cooriented with either $(1,1)$ or $(-1,1)$ and each point (possibly, except the first and the last ones) belongs to the strip $0< x\le L$. 
The vector between any two consecutive points in~$s$ is called a \emph{step}; it can be 
$(0,0)$, $(\pm\varepsilon,\varepsilon)$, $(\pm2\varepsilon,2\varepsilon)$, etc. 
%it does not have to be $(\varepsilon,\varepsilon)$ or $(-\varepsilon,\varepsilon)$.
%they are \emph{not} necessarily diagonal neighbors. 
%The consecutive points are \emph{not} necessarily diagonal neighbors; the vector between them is called a \emph{step} of~$s$.
%This vector is called a \emph{step} of the light path; its endpoints are \emph{not} necessarily diagonal neighbors. 
Zero vector is viewed as cooriented with any vector, so %any number of 
repeating points in~$s$ 
%a light path 
are allowed. The \emph{number of scatterings} (or \emph{layovers}) in $s$ 
%Denote by $\ell(s)$ the \emph{number of layovers}, i.e. 
is the number $\ell(s)$ of
points in~$s$ including all repetitions but excluding the first and the last points.
%Denote by $|s|$ the number of points in~$s$ including all repetitions.
%\mscomm{Not necessary neighbors, the multiplicity can be any positive integer!} 

%To a light path $s$ starting from a lattice point $(0,\tau)$ assign the complex number
%\begin{align}
%\label{a(s)}
%a(s;\omega,m,\varepsilon,\tau):=e^{i\omega \tau}(-im\varepsilon)^{\ell(s)}.%{|s|-2}.
%\end{align}
Define the \emph{reflection arrow} (or \emph{reflection amplitude}) to be the infinite sum
\begin{equation}\label{eq-def-model}
   a(\omega,m,L,\varepsilon):=\sum\limits_{\substack{\tau\in \varepsilon\mathbb{Z}\\\tau<0}}\,\,\sum_{s: (0,\tau) \rightsquigarrow (0,0)}
   e^{i\omega \tau}(-im\varepsilon)^{\ell(s)},
   %a(s;\omega,m,\varepsilon,\tau), 
\end{equation}   
where the inner summation is over all light paths starting at $(0,\tau)$ and ending at the origin.
%, and $\varepsilon\mathbb{Z}=\left\{\,\tau:\frac{\tau}{\varepsilon}\in\mathbb{Z}\,\right\}$.

Define the \emph{reflection probability} to be
$$P(\omega,m,L,\varepsilon):=|a(\omega,m,L,\varepsilon)|^2.$$
\end{definition}

%\begin{remark}
One still needs to check that the introduced notions are well-defined; that is, both series above converge, and the sums do not depend on the order of summation.
This is postponed until Appendix~\ref{appendix_correct}, and we proceed to the statement of the main result. 
%\end{remark}

\section{Theorem.}
\label{Theorem}

\epigraph{If you thought that science was certain -- well, that is just an error
on your part.}{R.P. Feynman}

\begin{theorem}[Thin-film reflection probability] \label{th-reflection} 
For each $\omega,m,L>0$ we have
\begin{equation} \label{eq-th-reflection}
\lim_{\varepsilon\searrow 0} P(\omega,m,L,\varepsilon)=
%\begin{cases}
\dfrac{(n^2-1)^2}{(n^2+1)^2+4n^2\cot^2{\omega nL}}, 
%&\text{if }\frac{\omega nL}{\pi}\notin \mathbb{Z};\\
%0, &\text{if }\frac{\omega nL}{\pi}\in \mathbb{Z},
%\end{cases}
\end{equation}
where $n:=\sqrt{1+2m/\omega}$ and the right side is set to be $0$
if $\frac{\omega nL}{\pi}\in\mathbb{Z}$, i.e., the cotangent is undefined.
%extended to $L=\pi/(\omega n),2\pi/(\omega n),\dots$ by continuity. %\textup{(}called the \emph{refractive index}\textup{)}.
\end{theorem}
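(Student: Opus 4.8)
The plan is to turn the two-dimensional path sum \eqref{eq-def-model} into a one-dimensional stationary scattering problem, solve it with transfer matrices, and then let $\varepsilon\searrow 0$. The first step is to resum the contributions of several scatterings occurring at one lattice point: since each scattering multiplies the amplitude by $-im\varepsilon$ and re-emits the photon in an arbitrary direction, summing the resulting geometric series over the number of layovers at a site replaces the local rule by an effective one-step scattering matrix coupling the right-moving and left-moving amplitudes, with the strength $m$ switched on inside the film $0<x\le L$ and off outside. I would then exploit the time-translation invariance of the weights to rewrite $a$ as the frequency-$\omega$ component (a discrete transform in $\tau$) of the propagator, so that $a$ becomes a matrix element of the resolvent at energy $\omega$. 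This converts the path sum into a purely spatial recurrence for a two-component amplitude $(\psi_+,\psi_-)$ governed by a spatial transfer matrix that is piecewise constant in $x$.

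Next I would diagonalize the transfer matrix in each of the regions $x\le 0$, $0<x\le L$, and $x>L$. Its eigenvalues have the form $e^{\pm i k\varepsilon}$, and the associated dispersion relation, expanded as $\varepsilon\searrow 0$, should give the free wavenumber $k=\omega$ outside the film and $k=n\omega$ inside it, with $n=\sqrt{1+2m/\omega}$; here the spectral radius of the transfer matrix governs both the convergence of the series and the propagating versus decaying character of the modes. The same eigenvector computation at the interfaces $x=0$ and $x=L$ produces the single-surface reflection coefficients $r_1=(1-n)/(1+n)$ and $r_2=-r_1$, i.e. the Fresnel coefficients for an index step $1\leftrightarrow n$ at normal incidence.

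Assembling the two interface contributions with the propagation phase $\delta:=n\omega L$ accumulated across the film gives the Airy reflection coefficient $r=\dfrac{r_1+r_2 e^{2i\delta}}{1+r_1 r_2 e^{2i\delta}}=\dfrac{r_1(1-e^{2i\delta})}{1-r_1^2 e^{2i\delta}}$. After fixing the normalization of the source at $x=0$ so that $a\to r$, I would compute $P=|a|^2\to|r|^2=\dfrac{4r_1^2\sin^2\delta}{1-2r_1^2\cos 2\delta+r_1^4}$. A direct algebraic simplification, using $1-r_1^2=4n/(1+n)^2$ and $4n^2+(n^2-1)^2=(n^2+1)^2$, then turns this into $\dfrac{(n^2-1)^2}{(n^2+1)^2+4n^2\cot^2\delta}$, with the convention $P=0$ when $\sin\delta=0$ emerging automatically from $\cot^2\delta\to\infty$.

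The main obstacle I expect is twofold. First, rigorously justifying the continuum limit — the convergence of the double series in \eqref{eq-def-model}, the interchange of the limit $\varepsilon\searrow 0$ with the summations, and the uniform control of the transfer-matrix products — which is presumably the content deferred to Appendix~\ref{appendix_correct}. Second, and more conceptually, deriving the \emph{optical} index $n=\sqrt{1+2m/\omega}$ rather than the naive massive-Dirac value: this is exactly where the multiple-scattering modification of Feynman's model matters, since the resummation of repeated (in particular forward) scatterings at each site is what shifts the phase velocity to the classical value from \cite[\S86]{Landavshits-08}. Pinning down the interface and source normalizations so that $a$ equals the reflection coefficient exactly is the remaining delicate point.
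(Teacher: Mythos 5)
Your proposal tracks the paper's own proof closely for its first two thirds: the resummation of repeated scatterings at a lattice site is exactly Lemma~\ref{Dirac_1} (the lattice Dirac equation~\eqref{eq-Dirac}), the reduction to a single frequency via time-translation invariance is Lemma~\ref{l-quasiperiodicity}, and the spatial recurrence with characteristic roots $e^{\pm ik(\varepsilon)\varepsilon}$, giving $k\to\omega n$ as $\varepsilon\searrow 0$, is~\eqref{KG_rec}--\eqref{eq-k}. Your closing algebra is also correct: the Airy expression $\left|r_1(1-e^{2i\delta})/(1-r_1^2e^{2i\delta})\right|^2$ with $r_1=(1-n)/(1+n)$ and $\delta=n\omega L$ does simplify to the right side of~\eqref{eq-th-reflection}.

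The gap is the step you dispose of in one sentence: ``the same eigenvector computation at the interfaces \dots produces the single-surface reflection coefficients $r_1=(1-n)/(1+n)$ and $r_2=-r_1$.'' Nothing in the lattice model hands you these numbers; deriving them --- together with the Stokes relations $r_1'=-r_1$ and $t_1t_1'=1-r_1^2$ that your closed-form Airy summation silently uses --- is the actual content of the endgame, and it is precisely what the paper does instead: it substitutes the plane-wave ansatz~\eqref{eq_1}--\eqref{eq_2} into the lattice equations and into the boundary conditions $a_+(\varepsilon,0)=e^{-i\omega\varepsilon}$, $a_-(L,0)=0$ (Example~\ref{ex-bc}), obtains the explicit linear system~\eqref{InitSys}, and then argues that its solution converges to the solution of the limiting system~\eqref{limSys} because the limit system is nondegenerate and the coefficients depend continuously on $\varepsilon$ --- a justification your sketch also omits. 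Note, moreover, that the model as defined has no dynamics outside the strip $0<x\le L$ (light paths cannot exit and re-enter), so your picture of three regions with piecewise-constant transfer matrices and free propagation $k=\omega$ outside does not literally apply; the exterior is encoded entirely by the source and absorption boundary conditions above, and this is where the source normalization you flag as ``the remaining delicate point'' is actually fixed, via $a_-(0,0)=a(\omega,m,L,\varepsilon)$ and $a_+(\varepsilon,t)=e^{i\omega(t-\varepsilon)}$. In short: right architecture, correct target algebra, but the interface matching --- the crux --- is asserted rather than proved, and filling it in essentially forces you to reproduce the paper's system~\eqref{InitSys}--\eqref{limSys}, after which your Airy resummation becomes an attractive repackaging of the paper's direct solution rather than a shortcut around it.
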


%\mscomm{Discuss the result here. The maximum reflection probability is less than 16\%.}

The parameter $n$ has the meaning of \emph{refractive index}. For ordinary glass, $n\approx 1.5$. The probability that light is reflected by a single surface is expressed through the refractive index as $(n-1)^2/(n+1)^2$ \cite[Eq.~(86.9)]{Landavshits-08} %\mscomm{BTW, is this formula consistent with our model???}. 
This indeed equals the value 4\% introduced at the beginning of Section~\ref{experiment}. 

Let us discuss briefly what we can and cannot achieve using this theorem.

Formula~\eqref{eq-th-reflection} is more accurate compared to the receipt in Table~\ref{alg}. According to~\eqref{eq-th-reflection}, the maximal reflection probability is $(n^2-1)^2/(n^2+1)^2\approx 15\%$, which is less than the value of 16\% from Section~\ref{theory}. The two theories agree for small refraction: $(n^2-1)^2/(n^2+1)^2\sim 4(n-1)^2/(n+1)^2$ as $n\to 1$.

The parameter $m$ is not directly measurable in an experiment. What we can do is measure the reflection percentage for a single surface and compute $n$; then~\eqref{eq-th-reflection} gives the percentage for two surfaces.
%%%the reflection probability for one particular $L$ and compute $n$ from~\eqref{eq-th-reflection}; then the same formula predicts the %%reflection 
%probability for any other~$L$.
%%%
%This is a typical feature of quantum models \cite[Chapter~3]{Feynman}.

%\begin{remark}
%Beware that 
The model cannot be used to find the %actual 
dependence of %the refractive index 
$n$ on %the light frequency 
$\omega$ because %the scattering strength 
$m$ also depends on $\omega$. We view $\omega$ as fixed and thus write $m$ rather than $m(\omega)$, and avoid discussion of the function $m(\omega)$. This would be a very different story related to the question of \emph{why the sky is blue}. See~\cite[Chapters~XV and~IX]{Landavshits-08}.
%\end{remark}

Known expression~\eqref{eq-th-reflection} for the reflection probability %in the right side of~\eqref{eq-th-reflection} 
is usually derived using wave optics %theory of light 
\cite[%Chapter~X, 
\S86, Problem~4]{Landavshits-08}. This is not surprising: 
one of the names for quantum mechanics is \emph{wave mechanics}. However, we do know that light consists of particles.
%, which does not agree with wave theory. We need quantum theory for a complete picture 
Thus we prefer a combinatorial model and %a combinatorial 
proof.

%\mscomm{Write that $m$ depends on $\omega$ so that the dependence of $n$ on $\omega$ is incorrect.}

\section{Proof.}
\label{proofs}

\epigraph{If you keep proving stuff that others have done\dots%, getting confidence, increasing the complexities of your solutions --- for the fun of it 
--- then one day you'll turn around and discover that nobody actually did that one!}{R.P~Feynman}
%, Lectures On Computation}

Our plan is as follows. We are interested in the reflection probability, %which is 
defined in terms of light paths ending at %the detector placed at 
the origin. However, to compute the probability, we consider paths ending at an \emph{arbitrary} lattice point $(x,t)$ and find a linear recurrence relation on the sum of their arrows. Identifying the boundary conditions, we get a linear system. Solving it in the limit as $\varepsilon\searrow0$,
%and passing to the limit $\varepsilon\searrow0$ %in the resulting linear system,
we prove the theorem.

This approach is a toy example of using \emph{Smirnov's (para)fermionic observables}, which have made a revolution in statistical physics last $25$ years. \anonymize{See, for instance, \cite{KSS-23} and references therein.}{}

%\mscomm{Update:} In order to prove Theorem \ref{th-reflection} we are going to define wave function depending on $x,t,m,\varepsilon$, then prove recurrence relation for this function and, finally, writing out all the boundary conditions and relations and passing to the limit $\varepsilon\searrow0$ we prove the Theorem.

%\mscomm{Reorder the text so that each step becomes motivated:
%\begin{enumerate}
%    \item 1-2 sentences on what and why we are going to do now.
%    \item notation required to \emph{state} the lemma.
%    \item Lemma (Dirac equation)
%    \item definition required to \emph{prove} the lemma (is it really \emph{required}?).
%    \item Proof of the lemma.
%\end{enumerate}
%
%}

%\mscomm{Introduce $a_{\pm}(x,t,\tau)$ first, to avoid repetitions and shorten the text.}

\begin{figure}[hb!]%[htb]
\centering
\begin{tabular}{lccc}
\begin{overpic}[width=0.2\linewidth]{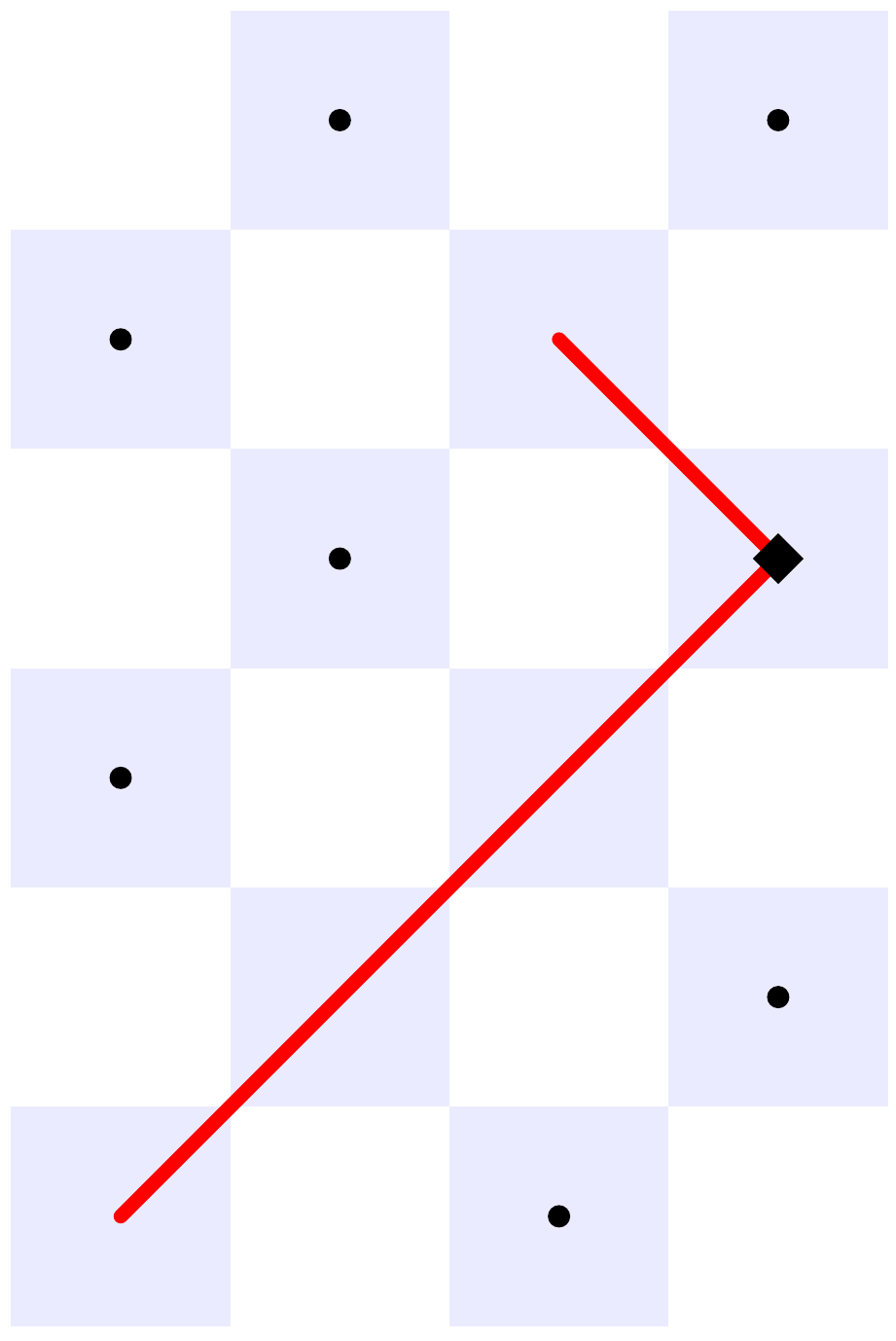}
\put(1,0){\footnotesize{$(0,\tau)$}}
\put(33,78){\footnotesize{$(x,t)$}} 
\put(50, 45){\footnotesize{$s$}}
\put(6,10.25){\rotatebox{-45}{$\blacktriangle$}}
\put(38.5,77.5){\rotatebox{-135}{$\blacktriangledown$}}
\end{overpic}
\hspace{2pt}
& 
\begin{overpic}[width=0.2\linewidth]{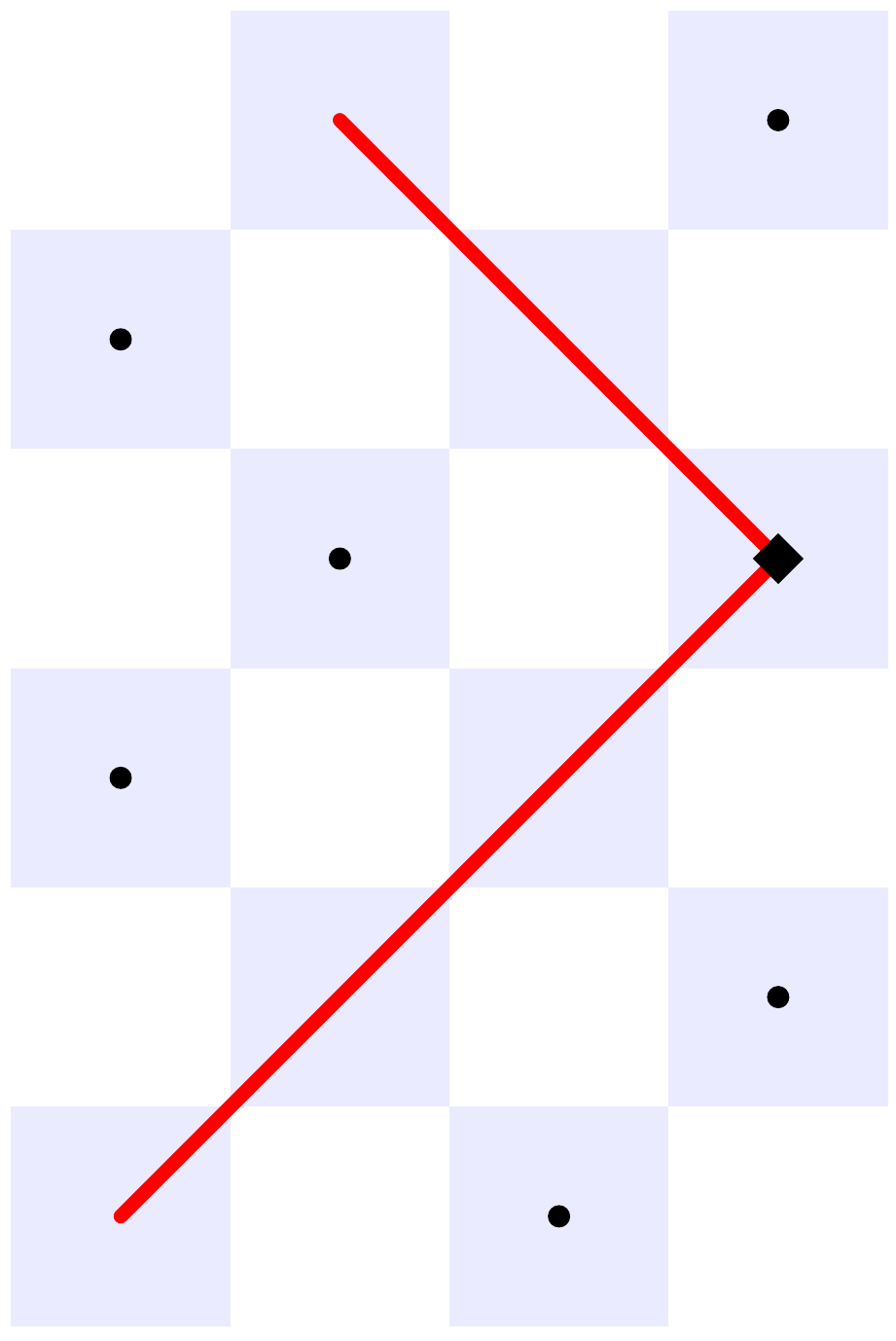} 
\put(1,0){\footnotesize{$(0,\tau)$}}
\put(3,94){\footnotesize{$(x-\varepsilon,t+\varepsilon)$}}  
\put(50, 45){\footnotesize{$s_0$}}
\put(6, 10.25){\rotatebox{-45}{$\blacktriangle$}}
\put(22, 94.25){\rotatebox{-135}{$\blacktriangledown$}}
\end{overpic}
\hspace{2pt}
&
\begin{overpic}[width=0.2\linewidth]{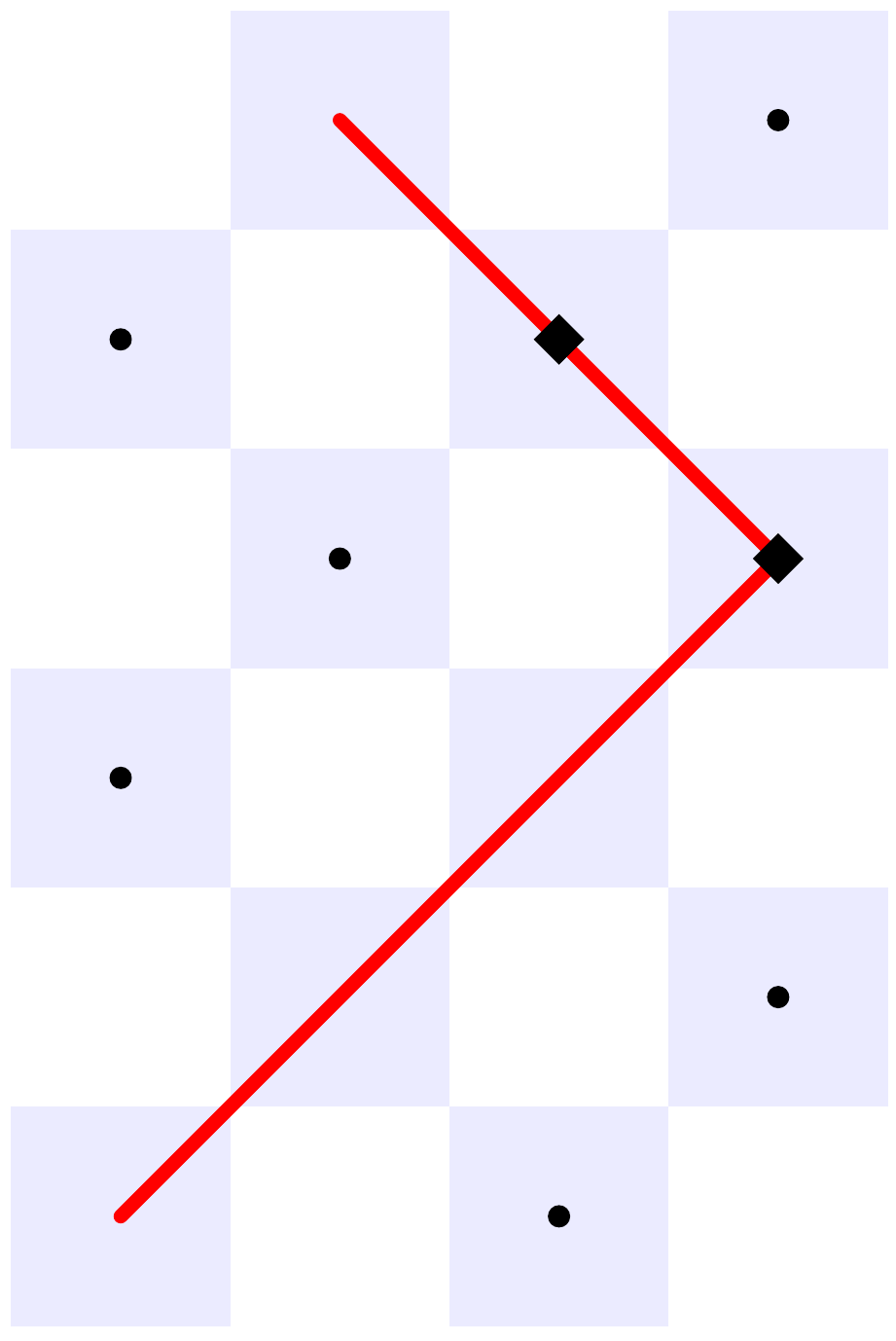} 
\put(1,0){\footnotesize{$(0,\tau)$}}
\put(3,94){\footnotesize{$(x-\varepsilon,t+\varepsilon)$}}  
\put(50, 45){\footnotesize{$s_1$}}
\put(6,10.25){\rotatebox{-45}{$\blacktriangle$}}
\put(22, 94.25){\rotatebox{-135}{$\blacktriangledown$}}
\end{overpic}
\hspace{2pt}
&
\begin{overpic}[width=0.2\linewidth]{DiracS1.pdf} 
\put(1,0){\footnotesize{$(0,\tau)$}}
\put(3,94){\footnotesize{$(x-\varepsilon,t+\varepsilon)$}}
\put(43, 76){\footnotesize{$\mathbf{2}$}}
\put(50, 45){\footnotesize{$s_2$}}
\put(6,10.25){\rotatebox{-45}{$\blacktriangle$}}
\put(22, 94.25){\rotatebox{-135}{$\blacktriangledown$}}
\end{overpic}
\end{tabular}
\caption{A light path $s$ contributing to the value $a_-(x,t)$ and the corresponding paths
$s_0, s_1, s_2, \dots$ contributing to $a_-(x-\varepsilon,t+\varepsilon)$. 
%In the rightmost figure,
Bold number ``$\mathbf{2}$'' means that the point is repeated twice in the path $s_2$.
%there were two scatterings in the corresponding point. 
See the proof of Lemma \ref{Dirac_1}.} %\mscomm{Label $s,$\dots!}} %\mscomm{Label $(x,t)$, $(x-\varepsilon,t+\varepsilon)$, $(0,\tau)$!}
\label{Checker-paths-ex}
%\vspace{-0.3cm}
\end{figure}

So, we introduce the \emph{wave function} in $x$, $t$, and the direction of the last step:
\begin{align}
\label{a+-}
a_{\pm}(x,t) = a_{\pm}(x,t;\omega,m,L,\varepsilon) := \sum_{\substack{\tau\in \varepsilon\mathbb{Z}\\\tau<t}}\sum_{\substack{s: (0,\tau) \rightsquigarrow (x,t)}}
e^{i\omega \tau}(-im\varepsilon)^{\ell(s)},
%a(s;\omega,m,\varepsilon,\tau),
\end{align}
where the inner summation is over all light paths $s$ starting at $(0, \tau)$ and ending at a lattice point $(x,t)$ with the last step
%, such that the vector from the point of the light path before the last one to $(x,t)$ is 
non-zero and cooriented with $(\pm1, 1)$ (the sign in $\pm$ is the same as in%the left side of
~\eqref{a+-}).  See Figure~\ref{Checker-paths-ex} to the left. We emphasize that the point $(x,t)$ occurs in $s$ exactly once. Set 
$a(s):=e^{i\omega \tau}(-im\varepsilon)^{\ell(s)}$.

As usual, the convergence issues are postponed until Appendix~\ref{appendix_correct}, and the proofs in this section are valid under the assumption that $a_{\pm}(x,t)$ are well-defined, that is, both series in~\eqref{a+-} converge absolutely. 
%\mscomm{too complicated to follow --- introduce the notion of a step? Write one more time that the last step is nonzero, the last point is not repeated!}.
%the last point of the light path to the previous one is non-zero and cooriented with $(\mp1,-1)$ (here the sign is opposite to the one in the left side of~\eqref{a+-}). \mscomm{Illustrate by a figure or give a reference to the figures already drawn!}

Physically, $|a_{+}(x,t)|^2$ and $|a_{-}(x,t)|^2$ for fixed $t$ mean the ``relative probability'' of finding the right- and left-moving photon respectively at the point $x\in [0,L]$ at the time $t$. (To interpret them as genuine probabilities, one needs to truncate and normalize the outer sum in~\eqref{a+-}\anonymize{; % and normalize the expression; 
see \cite[Section~4]{LKTG} for details}{}.)
%We postpone further physical explanations until \mscomm{--- here cite a remark from Appendix, when ready.}

In what follows, assume that $L/\varepsilon$ is an integer for notational convenience; otherwise, replace $L$ by $\lfloor L/\varepsilon \rfloor \varepsilon$ until the limit $\varepsilon\searrow 0$ is taken. Let us make a few observations.

\begin{example}[Boundary conditions] \label{ex-bc} 
%We have 
$a_+(\varepsilon,t) = e^{i\omega(t-\varepsilon)}$, $a_-(L,t) = 0$, $a_-(0,0)=a(\omega,m,L,\varepsilon)$.
%$$a_+(\varepsilon,0) = e^{-i\omega\varepsilon}, \quad a_-(L,0) = 0, \quad \text{and} \quad a_-(0,0)=a(\varepsilon,m,\omega,L)$$ 

\begin{proof}
%This holds because 
This follows from~\eqref{a+-} because among all light paths $s$ starting on the line $x=0$,
\begin{itemize}
    \item there is a single light path ending at $(\varepsilon,t)$ with the last step non-zero and
cooriented with $(1,1)$;
    \item there are no light paths ending at $(L,t)$ with the last step non-zero and 
cooriented with $(-1,1)$;
    \item there are no light paths ending at $(0,0)$ with the last step 
non-zero and cooriented with $(1,1)$.
\end{itemize}
\vspace{-0.6cm}
\end{proof}
%%%%%%%%%%%%%%%
%We have $a_+(\varepsilon,0) = e^{-i\omega\varepsilon}$ because there is only one light path ending at $(\varepsilon,0)$ such that the last step is non-zero and
%%the vector from the previous point is 
%cooriented with $(1,1)$. 
%%the last point of the path to the previous one is cooriented with $(-1,-1)$. 
%
%We have $a_-(L-\varepsilon,0) = 0$ because there are no light paths ending at $(L-\varepsilon,0)$ such that the last step is non-zero and %the vector from the previous point is
%cooriented with $(-1,1)$.
%%the last point to the previous one is cooriented with $(1,-1)$.
%
%%The value 
%We have $a_-(0,0)=a(\varepsilon,m,\omega,L)$ %equals the reflection amplitude 
%because there are no light paths ending at $(0,0)$ such that the last step is %the vector from the previous point is 
%cooriented with $(1,1)$.
%%the last point to the previous one is cooriented with $(-1,-1)$.
%%%%%%%%%%%%%%%%%%%
\end{example}

%\mscomm{Can we comment on the physical meaning of the wave function here?}

%\mscomm{First write what and why we are going to do.}

%Let $$m(x):= 
%\begin{cases} 
%m, &\text{ if } 0 < x < L; \\
%0, &\text{ otherwise.}
%\end{cases}$$

%\textcolor{red}{Here the picture illustrating such a definition}

%The key to the proof of the theorem is the following recurrence relation.

\begin{lemma}[Reccurence]%[Dirac equation]
\label{Dirac_1}
For each lattice point $(x,t)$ with $0<x\le L$,
%For each $t,\varepsilon,L$ and $x = \varepsilon, 2\varepsilon, \dots, L$, 
the following equalities hold:
\begin{equation}
 \label{eq-Dirac}
\begin{aligned}
    a_-(x-\varepsilon,t+\varepsilon)
&=\frac{1}{1+im\varepsilon}a_-(x,t)
+\frac{-im\varepsilon}{1+im\varepsilon}a_+(x,t);\\
a_+(x+\varepsilon,t+\varepsilon)
&=\frac{-im\varepsilon}{1+im\varepsilon}a_-(x,t)
+\frac{1}{1+im\varepsilon}a_+(x,t).
\end{aligned}
\end{equation}
\end{lemma}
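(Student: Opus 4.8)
The plan is to prove the two recurrences by a direct combinatorial bijection on light paths, classifying the contributions to $a_-(x-\varepsilon,t+\varepsilon)$ according to what the path does at the point $(x,t)$. Consider any light path $s$ counted in $a_-(x-\varepsilon,t+\varepsilon)$: its last step is non-zero and cooriented with $(-1,1)$, so the penultimate point is $(x,t)$. I would remove the final step, obtaining a shorter path $s'$ ending at $(x,t)$, and then ask whether the step \emph{entering} $(x,t)$ in $s$ came from the left (cooriented with $(1,1)$) or from the right (cooriented with $(-1,1)$), or whether the photon ``turned around'' via one or more scatterings at $(x,t)$ itself.

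The key step is to organize these cases using the repeated-point (multiple-scattering) structure visible in Figure~\ref{Checker-paths-ex}. For the first recurrence, a path contributing to $a_-(x-\varepsilon,t+\varepsilon)$ is obtained from a path $s'$ ending at $(x,t)$ by appending the step to $(x-\varepsilon,t+\varepsilon)$. If the last step of $s'$ is already cooriented with $(-1,1)$ (the family $s_0$), appending the new step requires no extra scattering at $(x,t)$, so $\ell$ is unchanged and the arrow is multiplied by $1$. If instead $s'$ arrives at $(x,t)$ moving rightward (family $s_1, s_2, \dots$), the photon must scatter at $(x,t)$ to reverse direction; each such family $s_k$ has $k\ge 1$ repetitions of $(x,t)$, contributing an extra factor $(-im\varepsilon)^k$. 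Summing the resulting geometric series
\begin{equation*}
\sum_{k=0}^{\infty}(-im\varepsilon)^k = \frac{1}{1+im\varepsilon}
\end{equation*}
over the number of scatterings produces exactly the coefficients $\frac{1}{1+im\varepsilon}$ and $\frac{-im\varepsilon}{1+im\varepsilon}$ appearing in~\eqref{eq-Dirac}, where the left-moving contribution $a_-(x,t)$ carries the bare $\tfrac{1}{1+im\varepsilon}$ and the right-moving contribution $a_+(x,t)$ picks up one mandatory scattering before the series. The second recurrence is entirely symmetric under reflecting the spatial direction.

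The main obstacle, and the point requiring genuine care, is the convergence and reindexing of the geometric series: I must verify that the bijection between paths contributing to the left-hand side and (path ending at $(x,t)$, number of scatterings at $(x,t)$) pairs is exact, with no double counting and no omitted configuration, and that the interchange of the sum over $k$ with the outer sums in~\eqref{a+-} is legitimate. This is where the hypothesis $m\varepsilon<1$ enters, guaranteeing $|{-im\varepsilon}|<1$ so the series converges to $\tfrac{1}{1+im\varepsilon}$; absolute convergence of the series defining $a_{\pm}$, which we have assumed per the remark preceding the lemma, then justifies rearranging the terms. Once the case analysis and the geometric summation are set up correctly, matching coefficients is routine.
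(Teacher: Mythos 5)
Your proposal follows the same route as the paper's proof: classify contributions according to the direction of arrival at $(x,t)$ and the number of scatterings there, sum the resulting geometric series, and invoke absolute convergence of the inner series (Lemma~\ref{inner_conv}) to justify the rearrangement. The families $s_0,s_1,s_2,\dots$ and the resulting coefficients $\frac{1}{1+im\varepsilon}$ and $\frac{-im\varepsilon}{1+im\varepsilon}$ are exactly those of the paper; you merely describe the correspondence from the left-hand side (decomposing each path) rather than from the right-hand side (expanding each path into a family), which is the same bijection read in the opposite direction.

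However, two of your statements are false as written, and both stem from one definitional point: by Definition~\ref{def-main}, $\ell(s)$ counts \emph{every} interior point of the sequence, turn or not, so appending a vertex to a path always multiplies its arrow by $-im\varepsilon$. First, ``its last step is non-zero and cooriented with $(-1,1)$, so the penultimate point is $(x,t)$'' is wrong: the last step may be $(-j\varepsilon,j\varepsilon)$ with $j\ge 2$, in which case $(x,t)$ lies in the interior of the last step and is not a vertex of the path at all; consequently ``remove the final step'' does not produce a path ending at $(x,t)$. Second, ``appending the new step requires no extra scattering at $(x,t)$, so $\ell$ is unchanged'' is wrong for the same reason: literal appending creates a new interior vertex and a factor $-im\varepsilon$. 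The weight-$1$ term ($k=0$) of your geometric series must instead come from \emph{extending} the final step through $(x,t)$ --- in the paper's language, replacing the endpoint $(x,t)$ of $s$ by $(x-\varepsilon,t+\varepsilon)$, so that $(x,t)$ disappears from the vertex sequence. Once these two sentences are repaired (the paper's device: remove the last point and leave exactly one instance of $(x,t)$, \emph{adding it if it is absent}), your case analysis, series computation, and convergence argument coincide with the paper's proof.
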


\begin{proof}
    Let us prove the first equality; the second one is proved analogously. %formula for $a_-(x+\varepsilon,t-\varepsilon,\tau)$. 
    To each light path $s$ contributing to the right side of the equality we assign an infinite collection of light paths contributing to the left side and show that the contributions coincide. See Figure~\ref{Checker-paths-ex}. Consider two cases, depending on whether $s$ contributes to $a_-(x,t)$ or to $a_+(x,t)$.
     
    \emph{Case 1:} $s$ contributes to $a_+(x,t)$. This means that $s$ is a light path from $(0,\tau)$ to $(x,t)$, and the last step is non-zero and cooriented with $(1,1)$. 
    %the vector from the last point of $s$ to the previous one is non-zero and cooriented with $(-1,-1)$. 
    Recall that a light path is a sequence of points of the lattice. For $j=1,2,\dots$, denote by 
    $s_j$ the light path obtained by appending $j-1$ times the point $(x,t)$ and one time the point $(x-\varepsilon, t+\varepsilon)$ at the end of $s$. The collection of light paths $s_1,s_2,\dots$ is the desired one. By definition, $a(s_j)= (-im\varepsilon)^j a(s)$ so that $\sum_{j=1}^\infty a(s_j)=\frac{-im\varepsilon}{1+im\varepsilon}a(s)$ as geometric series.
     
    \emph{Case 2:} $s$ contributes to $a_-(x,t)$. This means that $s$ is a light path from $(0, \tau)$ to $(x,t)$, and the last step 
    %    the vector from the last point of $s$ to the previous one 
    is non-zero and cooriented with $(-1,1)$. For $j=1,2,\dots$, denote by $s_j$ the light path obtained by appending $j-1$ times the point $(x,t)$ and one time the point $(x-\varepsilon, t+\varepsilon)$ at the end of~$s$. For $j=0$, denote by $s_0$ the light path obtained by replacing the point $(x,t)$ with $(x-\varepsilon, t+\varepsilon)$. The collection of light paths $s_j$ is the desired one: $a(s_j)= (-im\varepsilon)^j a(s)$ so that $\sum_{j=0}^\infty a(s_j)=\frac{1}{1+im\varepsilon}a(s)$ as geometric series.

All the paths contributing to the left side have been taken into account. Indeed, if a light path $s'$ contributes to the left side, then removing the last point and leaving only one instance of point $(x,t)$ (adding it if it is absent) we get a light path $s$ contributing to the right side
so that $s'=s_j$ for some~$j$.

We have shown that the contribution of each light path $s$ to the right side and the family of paths $s_j$ to the left side of the first equality in~\eqref{eq-Dirac} coincide. Since the inner series in~\eqref{a+-} is absolutely convergent (see Lemma~\ref{inner_conv} below) %and~\ref{correctness} below) 
and thus independent of summation order, it follows that the right and left sides of the first equality are equal.
\end{proof}

%\textbf{WRITE WHY THERE IS NO SCATTERING AFTER THE LAST MOVE}

%\begin{corollary}%[Klein-Gordon equality]
%\label{KG}
%For each $x,t,\varepsilon,L$ such that $2\varepsilon\le x\le L-2\varepsilon$, we have
%\begin{multline}
%\label{kga+}
    %(1 + im\varepsilon)a_\pm(x, t+\varepsilon) 
    %- a_\pm(x-\varepsilon, t) 
    %- a_\pm(x+\varepsilon,t) 
   % + (1-im\varepsilon)a_\pm(x,t-\varepsilon)=0.
%\end{multline}    
%%%%%%%%%%%%%%% OLD %%%%%%%%%%%%%%%%%%%%
%For each $x,t,\varepsilon,L$, the following equalities holds:
%%\begin{align}
%%&\begin{split}
%\begin{multline}
%\label{kga+}
%    m(x)(1 + im(x-\varepsilon)\varepsilon)a_+(x, t+2\varepsilon) 
%    - m(x) a_+(x-\varepsilon, t+\varepsilon) -\\
%    - m(x-\varepsilon)a_+(x+\varepsilon,t+\varepsilon) 
%    + m(x-\varepsilon)(1-im(x)\varepsilon)a_+(x,t)=0;
%\end{multline}    
%%    \\
%%\end{split}\\
%%&\begin{split}
%\begin{multline}
%\label{kga-}
%    m(x)(1 + im(x+\varepsilon)\varepsilon)a_-(x, t+2\varepsilon) 
%    - m(x) a_-(x+\varepsilon, t+\varepsilon) -\\
%    - m(x+\varepsilon)a_-(x-\varepsilon,t+\varepsilon) 
%    + m(x+\varepsilon)(1-im(x)\varepsilon)a_-(x,t)=0.
%\end{multline}
%%\end{split}
%%\end{align}
%\end{corollary}
%\begin{proof} 
%\mscomm{Prove one identity and say the second one is proved analogously.}
%Let us prove the identity for $a_+(x,t)$; the identity for $a_-(x,t)$ is proved analogously.
%Substituting the expression for $a_{+}(x,t)$ from the first equality from Lemma \ref{Dirac_1} into the second equality from Lemma \ref{Dirac_1} and canceling the common factors we obtain the desired identity.
%\end{proof}

\begin{lemma}[Quasiperiodicity] \label{l-quasiperiodicity} 
For each lattice point $(x,t)$, %with $0<x\le L$,
%For each $t,\varepsilon,L$ and $x = \varepsilon, 2\varepsilon, \dots, L$ %$0<x<L$ 
we have  $a_{\pm}(x,t) = e^{i\omega t}a_{\pm}(x,0)$. 
\end{lemma}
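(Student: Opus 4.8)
The plan is to exploit the time-translation invariance of the model. Fix a lattice point $(x,t)$, and define the map that sends a light path $s$ to the sequence $s'$ obtained by subtracting $t$ from the time coordinate of every point of $s$ while leaving the space coordinates untouched. Since $t\in\varepsilon\mathbb{Z}$, this map sends $\varepsilon\mathbb{Z}^2$ into itself, so $s'$ is again a sequence of lattice points. I would then check that $s'$ is a light path whenever $s$ is: each step of $s'$ equals the corresponding step of $s$ (a purely vertical translation does not change differences of consecutive points), so every step remains cooriented with $(1,1)$, $(-1,1)$, or is zero; and the strip condition $0<x\le L$ constrains only the space coordinate, which is unchanged. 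Thus $s\mapsto s'$ is a bijection from light paths $(0,\tau)\rightsquigarrow(x,t)$ onto light paths $(0,\tau-t)\rightsquigarrow(x,0)$, with inverse given by adding $t$ back.

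Next I would observe that this bijection preserves every quantity entering the weight. The number of scatterings $\ell(s)$ counts interior points with repetitions and is insensitive to a rigid translation, so $\ell(s')=\ell(s)$. The direction of the last step is likewise unchanged, so $s$ contributes to $a_+$ (resp.\ $a_-$) exactly when $s'$ does, and the sign in $\pm$ is respected. Hence the only change in the summand $e^{i\omega\tau}(-im\varepsilon)^{\ell(s)}$ is in the starting time: writing $\tau'=\tau-t$, the path $s'$ starts at $(0,\tau')$ and carries the weight $e^{i\omega\tau'}(-im\varepsilon)^{\ell(s')}=e^{-i\omega t}\,e^{i\omega\tau}(-im\varepsilon)^{\ell(s)}$.

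It then remains to reindex the double sum in~\eqref{a+-}. As $\tau$ ranges over $\varepsilon\mathbb{Z}$ with $\tau<t$, the new variable $\tau'=\tau-t$ ranges over $\varepsilon\mathbb{Z}$ with $\tau'<0$, which is exactly the summation range defining $a_\pm(x,0)$. Under the bijection above, each term of $a_\pm(x,t)$ equals $e^{i\omega t}$ times the corresponding term of $a_\pm(x,0)$, giving
$$a_\pm(x,t)=\sum_{\substack{\tau'\in\varepsilon\mathbb{Z}\\\tau'<0}}\;\sum_{s':(0,\tau')\rightsquigarrow(x,0)}e^{i\omega t}\,e^{i\omega\tau'}(-im\varepsilon)^{\ell(s')}=e^{i\omega t}\,a_\pm(x,0).$$

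The only point requiring care---and the closest thing to an obstacle---is the legitimacy of this reindexing, i.e.\ that rearranging the sum and extracting the constant factor $e^{i\omega t}$ does not affect its value. This is guaranteed by the standing assumption that the series in~\eqref{a+-} converge absolutely (to be established in Appendix~\ref{appendix_correct}); given absolute convergence, the term-by-term bijection immediately yields the claimed identity. No genuine analytic difficulty remains once one has verified the bijection together with the invariance of $\ell$ and of the last-step direction.
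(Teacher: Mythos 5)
Your proof is correct and follows essentially the same route as the paper's: both shift each light path from $(0,\tau)\rightsquigarrow(x,t)$ down by $t$ in the time direction to get a path $(0,\tau-t)\rightsquigarrow(x,0)$, observe the weight relation $a(s)=e^{i\omega t}a(s')$, and sum over all paths and starting times. Your write-up simply makes explicit the details the paper leaves implicit (bijectivity of the shift, invariance of $\ell$ and of the last-step direction, the reindexing $\tau'=\tau-t$, and the appeal to absolute convergence).
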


\begin{proof}
   To each light path $s'$ from $(0,\tau)$ to $(x,t)$, contributing to \eqref{a+-}, assign the light path $s$ from $(0,\tau-t)$ to $(x,0)$ obtained by a shift in the $t$-direction. See Figure \ref{Feynman_pic} to the left.
   %By \eqref{a(s)} 
   We get $a(s') = e^{i\omega t}a(s)$. Summing over all $s'$ and $\tau<t$, we get the desired 
   result.
\end{proof}

%\begin{figure}[h]
%\hspace{6cm}
%\begin{overpic}[width=0.7\linewidth]{pictures/move.pdf}
%% y-axis
%\footnotesize{
%\put(-6.25,5){$-4\varepsilon$} 
%\put(-6.25,10){$-3\varepsilon$} 
%\put(-6.25,16){$-2\varepsilon$} 
%\put(-4.75,21){$-\varepsilon$} 
%\put(-2.5,26){$0$} 
%\put(-2.75,31){$\varepsilon$} 
%\put(-3.75,36.25){$2\varepsilon$} 
%\put(-3.75,41.5){$3\varepsilon$} 
%% x-axis
%\put(2,0){$0$}
%\put(8,0){$\varepsilon$}
%\put(12,0){$2\varepsilon$}
%\put(18,0){$3\varepsilon$}
%\put(23,0){$4\varepsilon$}
%% s and s`
%\put(9,10){$s$}
%\put(9,31){$s^{\prime}$}}
%\end{overpic}
%\vspace{-0.5cm}
%    \caption{A light path $s$ and the shifted light path $s'$. See the proof of Lemma \ref{l-quasiperiodicity}. \mscomm{Notation $s'$ contradicts the text!}}
%    \label{fig:shift}
%\end{figure}

\begin{proof}[Proof of Theorem \ref{th-reflection}] 
%First, we observe that
%\begin{equation}\label{eq-t=0}
%    a_{\pm}(x,t) = e^{i\omega t}a_{\pm}(x,0).
%\end{equation}
%Indeed, to each light path $s$ from $(0,\tau)$ to $(x,t)$ assign the light path $s'$ from $(0,\tau-t)$ to $(x,0)$ obtained by a shift in the $t$-direction (see Figure \ref{fig:shift}). By \eqref{a(s)} we have $a(s) = e^{i\omega t}a(s')$. Hence, by \eqref{a+-} we get~\eqref{eq-t=0}.
%Thus, 
Lemmas~\ref{Dirac_1} and~\ref{l-quasiperiodicity} give us the following system of equations for $x = \varepsilon, 2\varepsilon,\dots, L$:
\begin{align}
    %\begin{split}
    \label{diracT0a-}
         a_-(x-\varepsilon,0)e^{i \omega \varepsilon}
         &=\frac{1}{1+im\varepsilon}a_-(x,0)
         +\frac{-im\varepsilon}{1+im\varepsilon}a_+(x,0);\\
    \label{diracT0a+}
         a_+(x+\varepsilon,0)e^{i \omega \varepsilon}
         &=\frac{-im\varepsilon}{1+im\varepsilon}a_-(x,0)
         +\frac{1}{1+im\varepsilon}a_+(x,0).
    %\end{split}
\end{align}
Example~\ref{ex-bc} gives the following boundary conditions:
\begin{align}
\label{Bound_cond}
a_+(\varepsilon,0) = e^{-i\omega\varepsilon} \qquad\text{and}\qquad
a_-(L,0) = 0.
\end{align}
%Indeed, the first condition holds because there is only one light path ending at $(\varepsilon,0)$ such that the vector from the last point of the path to the previous one is cooriented to $(-1,-1)$. The second condition holds because there are no light paths ending at $(L-\varepsilon,0)$ such that the vector from the last point of the path to the previous one is cooriented to $(1,-1)$.
%\mscomm{--- Those conditions need to be explained in a few words}
We are going to solve system~\eqref{diracT0a-}--\eqref{Bound_cond} in the limit $\varepsilon\searrow 0$ for fixed $\omega$ and $m$, and thus find the reflection probability $\lim\limits_{\varepsilon\searrow 0}|a_-(0,0)|^2$. 

First let us prove that for small enough $\varepsilon>0$, any solution of system~\eqref{diracT0a-}--\eqref{diracT0a+} has the form 
\begin{align}
\label{eq_1}
    &a_+(x,0) = a(\varepsilon) e^{ik(\varepsilon) x} + b(\varepsilon)e^{-ik(\varepsilon) x}, \\
\label{eq_2}
    &a_-(x,0) = c(\varepsilon) e^{ik(\varepsilon) x} + d(\varepsilon)e^{-ik(\varepsilon) x},
\end{align}
for some coefficients $a(\varepsilon), b(\varepsilon), c(\varepsilon), d(\varepsilon) \in \mathbb{C}$ and  $k(\varepsilon)>0$ depending on $\varepsilon$. Such form has a clear physical meaning: light inside the medium is the superposition of the reflected wave and the transmitted wave, with a new wavelength  $\lambda(\varepsilon):={2 \pi}/{k(\varepsilon)}$. %Hence the reflection amplitude is $a_-(0,0) = c(\varepsilon) + d(\varepsilon)$.

Indeed, expressing $a_{-}(x,0)$ from \eqref{diracT0a+}, substituting into \eqref{diracT0a-}, and simplifying, % the result, %canceling the common factors 
we obtain the following recurrence relation for $x = 2\varepsilon, 3\varepsilon, \dots, L-\varepsilon$: \begin{align}
\label{KG_rec}
    a_{+}(x-\varepsilon,0) - 2(\cos \omega\varepsilon - m\varepsilon \sin \omega\varepsilon)a_{+}(x,0) + a_{+}(x+\varepsilon,0) = 0.
    %a_{\pm}(x-\varepsilon,0) - ((1-im\varepsilon)e^{-i\omega\varepsilon}+ (1+im\varepsilon)e^{i\omega\varepsilon})a_{\pm}(x,0) + a_{\pm}(x+\varepsilon,0) = 0.
\end{align}
%Let us prove the identity for $a_+(x,0)$; the identity for $a_-(x,0)$ is proved analogously. Substituting the expression for $a_{+}(x,0)$ from \eqref{eq_1} into \eqref{eq_2} and canceling the common factors we obtain the desired identity.
%Indeed, Corollary \ref{KG} gives us \mscomm{Let us remove Corollary~1 and instead derive the next equation directly from~\eqref{diracT0a-}--\eqref{diracT0a+} --- just move the proof here!} the following recurrence relation for $x = 2\varepsilon, 3\varepsilon, \dots, L-2\varepsilon$:
%\begin{align*}
    %a_{\pm}(x-\varepsilon,0) - ((1-im\varepsilon)e^{-i\omega\varepsilon}+ (1+im\varepsilon)e^{i\omega\varepsilon})a_{\pm}(x,0) + a_{\pm}(x+\varepsilon,0) = 0. %\mscomm{\text{check signs at $im$!}}
%\end{align*}
The function $a_{-}(x,0)$ satisfies the same relation.
It is well-known that a general solution of the recurrence relation has form~\eqref{eq_1}, where $e^{\pm ik(\varepsilon) \varepsilon}$ are the two roots of the \emph{characteristic polynomial}
\begin{align}
\label{P(lambda)}
    P(\chi)=\chi^2 - 2(\cos \omega\varepsilon - m\varepsilon \sin \omega\varepsilon)\chi + 1. 
\end{align}
Since 
$$|\cos \omega\varepsilon - m\varepsilon \sin \omega\varepsilon| < 1$$
for small enough $\varepsilon$, it follows that $k(\varepsilon)$ is a real number. Assume WLOG $k(\varepsilon)>0$. %Eq.~\eqref{eq_2} is proved analogously. 
%with the same $k(\varepsilon)$.

Now let us find the limits of the coefficients $a(\varepsilon), b(\varepsilon), c(\varepsilon), d(\varepsilon),k(\varepsilon)$ and the reflection probability $|a_-(0,0)|^2=|c(\varepsilon) + d(\varepsilon)|^2$ as $\varepsilon\searrow 0$. 

%Since $e^{\pm ik\varepsilon}$ are the roots of~\eqref{P(lambda)}, 
By Vieta's formulae, it follows that
\begin{align*}
    \cos k(\varepsilon)\varepsilon = \cos \omega\varepsilon - m\varepsilon \sin \omega\varepsilon = 1 - \left(m\omega+\frac{\omega^2}{2}\right)\varepsilon^2 + o(\varepsilon^2)\qquad\text{as }\varepsilon \searrow 0.
\end{align*}
Passing to the limit, %as $\varepsilon \searrow 0$
we obtain
\begin{align}\label{eq-k}
    %\frac{2\pi}{\lambda} = 
    k := \lim\limits_{\varepsilon \searrow 0}k(\varepsilon)=\omega \sqrt{1+\frac{2m}{\omega}}=\omega n. %= \frac{2\pi}{\lambda_{vac}}n .
\end{align}
%The value $n := \sqrt{1+\frac{2m}{\omega}}$ is called \emph{the refractive index}.
%The system for the coefficients at $e^{-ik(\varepsilon) x}$ has the solution only with the very same $\tilde{w}(\varepsilon)$. %
This relation is a well-known result from optics: the wavelength $\lambda:=2\pi/k$ %\mscomm{inverse wavelength!} 
in the medium is smaller than the wavelength $\lambda_{\text{vac}}:=2\pi/\omega$ in the vacuum, and $\lambda=\lambda_{\text{vac}}/n$.

Substituting \eqref{eq_1}--\eqref{eq_2} into \eqref{diracT0a+}--\eqref{Bound_cond} and comparing the coefficients at $e^{\pm ik(\varepsilon)x}$, we get the following linear system in the variables $a(\varepsilon), b(\varepsilon), c(\varepsilon), d(\varepsilon)$:
\begin{equation}\label{InitSys}
\begin{aligned} 
c(\varepsilon)&=a(\varepsilon)\cdot \frac{ 1 - e^{i(\omega+k(\varepsilon))\varepsilon}(1+im\varepsilon)}{im\varepsilon},
&
a(\varepsilon)e^{ik(\varepsilon)\varepsilon\,} + b(\varepsilon)e^{-ik(\varepsilon)\varepsilon\,}&=e^{-i\omega\varepsilon},
\\
d(\varepsilon)&=b(\varepsilon)\cdot \frac{1  - e^{i(\omega - k(\varepsilon))\varepsilon }(1+im\varepsilon)}{im\varepsilon},
&
c(\varepsilon)e^{ik(\varepsilon)L} + d(\varepsilon)e^{-ik(\varepsilon)L}&=0. 
\end{aligned}
\end{equation}
%
%\mscomm{The following sentence leads to the wrong impression that the reader is invited to do it herself:} Solving this system and computing 
%$$|\lim\limits_{\varepsilon \searrow 0} (c(\varepsilon) + d(\varepsilon))|^2$$ completes the proof. Doing this exactly in such an order is quite complicated since one has to do a lot of computations and the formulas are quite large. To avoid these problems we prove that the system has the unique solution and is not degenerate as  $\varepsilon \searrow 0$ and finally solve the system obtained in the limit.
%
%The direct computation in \cite{computations} shows that the system \eqref{InitSys} has the unique solution, moreover, the determinant of the corresponding matrix tends to $$\frac{-2\omega\cos{Lk} - 2 i(m + k) \sin Lk }{m}$$ as $\varepsilon \searrow 0$, where $k = \lim\limits_{\varepsilon \searrow 0}k(\varepsilon)$. \mscomm{--- This seems unnecessary; it suffices for the determinant for the limiting system to be nonzero.}
%
Passing to the limit as $\varepsilon \searrow 0$ we obtain 
%\mscomm{Remove `split' analogously to the previous equation, put analogous equations on the same positions}:
\begin{equation}\label{limSys}
    \begin{aligned}
    c &= a \cdot \frac{-m-\omega-k}{m},  & a + b&=1,\\
    d &= b \cdot \frac{-m-\omega+k}{m},  & c e^{ikL} + d e^{-ikL} &= 0,
    \end{aligned}
\end{equation}
where $a,b,c,d$ are the limits of $a(\varepsilon), b(\varepsilon), c(\varepsilon), d(\varepsilon)$. %respectively. 
Since linear system \eqref{limSys} has clearly a unique solution $(a,b,c,d)$ for any $\omega,m,L,k>0$ and the coefficients of linear system~\eqref{InitSys} depend on $\varepsilon$ continuously, it follows that %\eqref{InitSys} has a unique solution
the limit of the solution $(a(\varepsilon), b(\varepsilon), c(\varepsilon), d(\varepsilon))$ %as $\varepsilon \searrow 0$ 
exists and equals $(a,b,c,d)$. 
%and $\lim\limits_{\varepsilon \searrow 0 } (a(\varepsilon, b(\varepsilon), c(\varepsilon), d(\varepsilon)) = (a,b,c,d)$.

A direct computation using~\eqref{eq-k} gives %(see \cite{computations}) shows that 
$$
c + d = -\frac{m}{m+\omega-ik\cot kL},
$$
where the right side 
is set to be $0$
when $\frac{kL}{\pi}\in\mathbb{Z}$, i.e., the cotangent is undefined.
%is extended to $L=\pi/k,2\pi/k,\dots$ by continuity. 
Hence
%The following direct computation completes the proof:
\begin{align*}
    \lim\limits_{\varepsilon\searrow 0} P(\omega,m,L,\varepsilon) = 
    |c+d|^2
    %\left|\frac{-m}{m+\omega-ik\cot Lk}\right|^2 
    %=\frac{4m^2/\omega^2}{(2m/\omega+2)^2 + 4(k^2/\omega^2)\cot^2 kL} 
    =\frac{(2m/\omega)^2}{(2m/\omega+2)^2 + 4(k/\omega)^2\cot^2 kL}
    = \frac{(n^2-1)^2}{(n^2+1)^2 + 4n^2 \cot^2 n\omega L}.
\end{align*}
\end{proof}

\section{Problems.}
\label{Problems}

\epigraph{And, so I dreamed that if I were clever, I would find a formula for the amplitude of a path that was beautiful and simple for three dimensions of space and one of time,
which would be equivalent to the Dirac equation\dots
%, and for which the four components,
%matrices, and all those other mathematical funny things would come out as a simple
%consequence --- I have never succeeded in that either.
}{R.P. Feynman, Nobel lecture}

%\mscomm{Write informally on relation to other models --- Feynman checkers, quantum walks, 6-vertex model, quantum algorithms. Problem on simulating double slit experiment. Ref to~\cite{LKTG}}
Let us discuss a few equivalent definitions, related models, and open problems.
%ongoing research directions. 

Take a closer look at the \emph{inner} sum in~\eqref{eq-def-model} and~\eqref{a+-}, which is more fundamental, and for $\tau< t$ denote 
\begin{align}
\label{a(T)+-}
a_{\pm}(x,t;\tau) = a_{\pm}(x,t;m,L,\varepsilon, \tau) := \sum_{\substack{s: (0,\tau) \rightsquigarrow (x,t)}}(-im\varepsilon)^{\ell(s)},%{|s|-2},
\end{align}
where the summation is over the same light paths $s$ in the strip $0<x\le L$ as in~\eqref{a+-}. Finding an asymptotic formula for expression~\eqref{a(T)+-} as $\varepsilon\searrow 0$ while the other parameters are fixed is an open problem.
%and $N+2$ is the number of points in~$s$. %Physically, $|a_{+}(x,t,\tau)|^2$ and $|a_{-}(x,t,\tau)|^2$ mean the probability of finding the right- and left-moving photon respectively at the point $x\in [0,L]$ at the time $t$, if it was emitted from the point $0$ at the time $\tau<0$.

Here it is easy to get rid of repeating points in the light paths. Indeed, define a \emph{checker path} $p$ to be a light path such that each step
%the vector from each point (except the last one) to the next one 
equals either $(\varepsilon,\varepsilon)$ or $(-\varepsilon,\varepsilon)$  (not just is cooriented with those vectors). Then $\ell(p)=(t-\tau)/\varepsilon-1$. Let $\mathrm{turns}(p)$ be the number of \emph{turns}, i.e. pairs of orthogonal consecutive steps.
%and let $\ell(p)=(t-\tau)/\varepsilon-1$ be a number of all possible scattering points in $p$.
%i.e. points of a checker path $p$ (not the first and not the last one) such that the vector from this point to the previous one is orthogonal to the vector from this point to the next one.
%\textit{A turn} is a point of the checker path (not the first and not the last one) such that the vector from this point to the previous one is orthogonal to the vector from this point to the next one.
Summing geometric series at each point (see %the details in 
Figure~\ref{Checker-paths-ex} except the leftmost picture and the proof of Lemma~\ref{inner_conv} below), one can rewrite $a_{\pm}(x,t;\tau)$ as a sum over checker paths $p$ from $(0,\tau)$ to $(x,t)$:
\begin{align}
\label{a(T)+-p}
a_{\pm}(x,t;\tau) = \sum_{\substack{p: (0,\tau) \rightsquigarrow (x,t)}}
\frac{(-im\varepsilon)^{\mathrm{turns}(p)}}{(1+im\varepsilon)^{\ell(p)}}.%{|p|-2}}.
\end{align}

The resulting model is known as \emph{Feynman checkers} or a \emph{quantum walk}. 
Physically, $|a_{+}(x,t;\tau)|^2$ and $|a_{-}(x,t;\tau)|^2$ are interpreted as the probability of finding the right- and left-moving \emph{electron} respectively at the point $x\in [0,L]$ at the time $t$ if it was emitted from the point $0$ at the time $\tau<0$. An electron is a particle very different from a photon but unexpected relations between them do happen. The parameter $m$ is interpreted as the electron \emph{mass} (in the units where Plank's constant is $1$) and~\eqref{eq-Dirac} is a lattice analog of \emph{Dirac equation} describing the motion of an electron along a line. Usually, denominators in~\eqref{a(T)+-p} and~\eqref{eq-Dirac} contain $\sqrt{1+m^2\varepsilon^2}$ instead of $1+im\varepsilon$; this does not affect the probabilities because 
the two numbers have the same absolute value and $\ell(p)$ is the same for all checker paths $p$ in~\eqref{a(T)+-p}. So, the more conventional definition is  
\begin{align}
\label{a+-p}
a_{\pm}(x,t;m,\varepsilon) = \sum_{\substack{p: (0,0) \rightsquigarrow (x,t)}}
\frac{(-im\varepsilon)^{\mathrm{turns}(p)}}{(1+m^2\varepsilon^2)^{\ell(p)/2}},%{|p|/2-1}},
\end{align}
where the checker paths $p$ are usually \emph{not} restricted to the strip $0<x\le L$
(such a restriction would mean the electron is absorbed at the points $x=0$ and $x=L+\varepsilon$).
%Also, usually, the checker paths $p$ are not restricted to the strip $0<x\le L$. 

Let us look at this from the computational point of view. 
%We can also view this as a \emph{quantum algorithm}: 
Running a quantum walk many times, we can ``compute'' the probability distribution $|a_{\pm}(x,t;\tau)|^2$ approximately. If we modify the system so that the probability distribution becomes a desired target function, then we get a \emph{quantum algorithm} for computing the function. \emph{Quantum walks on graphs} have ultimate computational power: 
%Quantum walks on graphs is a popular topic in physics and quantum computation.
they can solve any problem that any other quantum computer can~\cite{Venegas-Andraca-12}. %But there is still a lack of mathematical results. 

Quantum walks %on graphs 
are popular %topic 
in physics. %In his Nobel lecture, Feynman posed the problem of generalizing them to $4$ dimensions so that the so-called fundamental solution of Dirac's equation arises in the limit $\varepsilon\searrow 0$. 
See\anonymize{~\cite{LKTG} for an elementary introduction %to the model 
and}{} \cite{Kempe-09, Kuyanov-Slizkov-22, Novikov-22, \anonymize{SU-22,}{} Venegas-Andraca-12} for surveys and open %mathematical 
problems.
\anonymize{Our %present 
work realizes one step of the research program outlined in~\cite[\S7]{SU-22}.}{} 
%\mscomm{An option is to mention an open question by M.~Blank and S.~Shlosman: does the number of local maxima of the function $P(x):=|a_-(x, 0, m, \varepsilon)|^2 $ remain bounded as $\varepsilon\searrow 0$ for fixed $\tau$ and $m$?} 

\begin{figure}[htbp]
\centering
  \begin{tabular}{cccccc}
  \multicolumn{2}{c}{
  \begin{overpic}[width=0.31\textwidth]{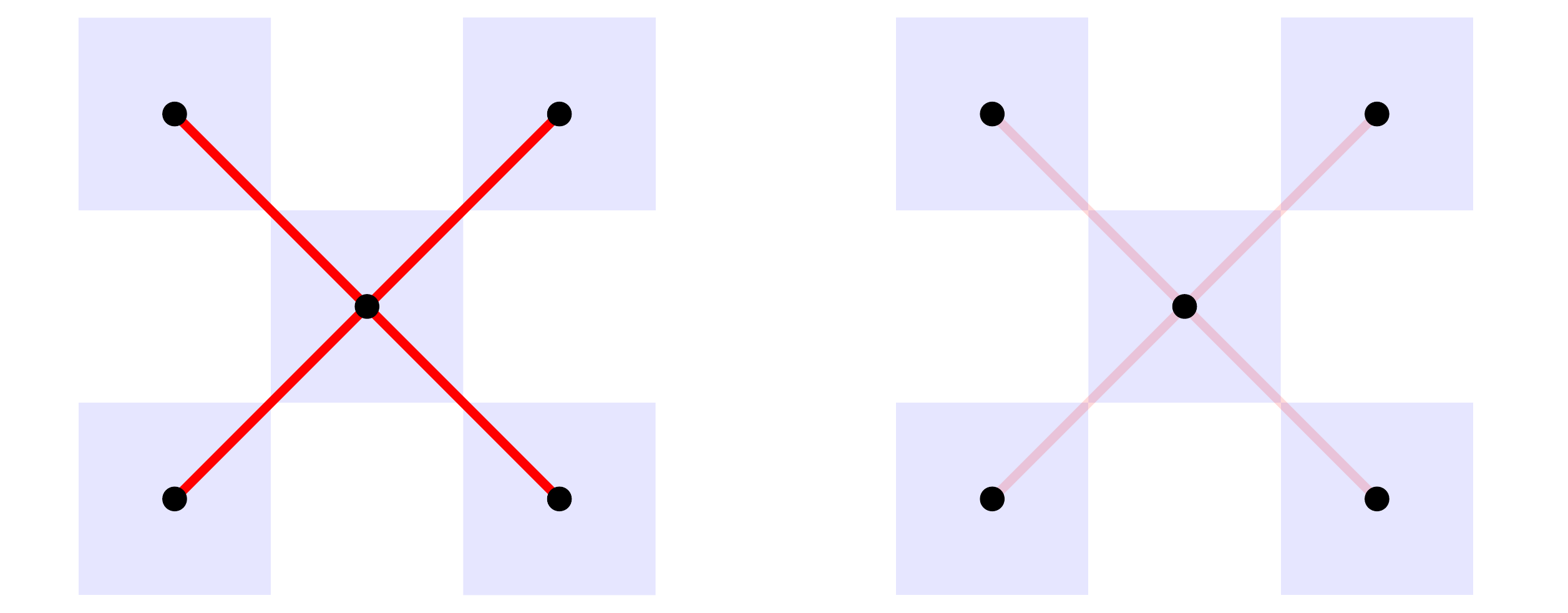}
  \footnotesize{
  \put(4.5,0){$\underbrace{\phantom{aaaaaaaaaaaaaaaaaaaaaaaaaa,}}$}
  }
  \end{overpic}
  }&
  \multicolumn{2}{c}{
  \begin{overpic}[width=0.31\textwidth]{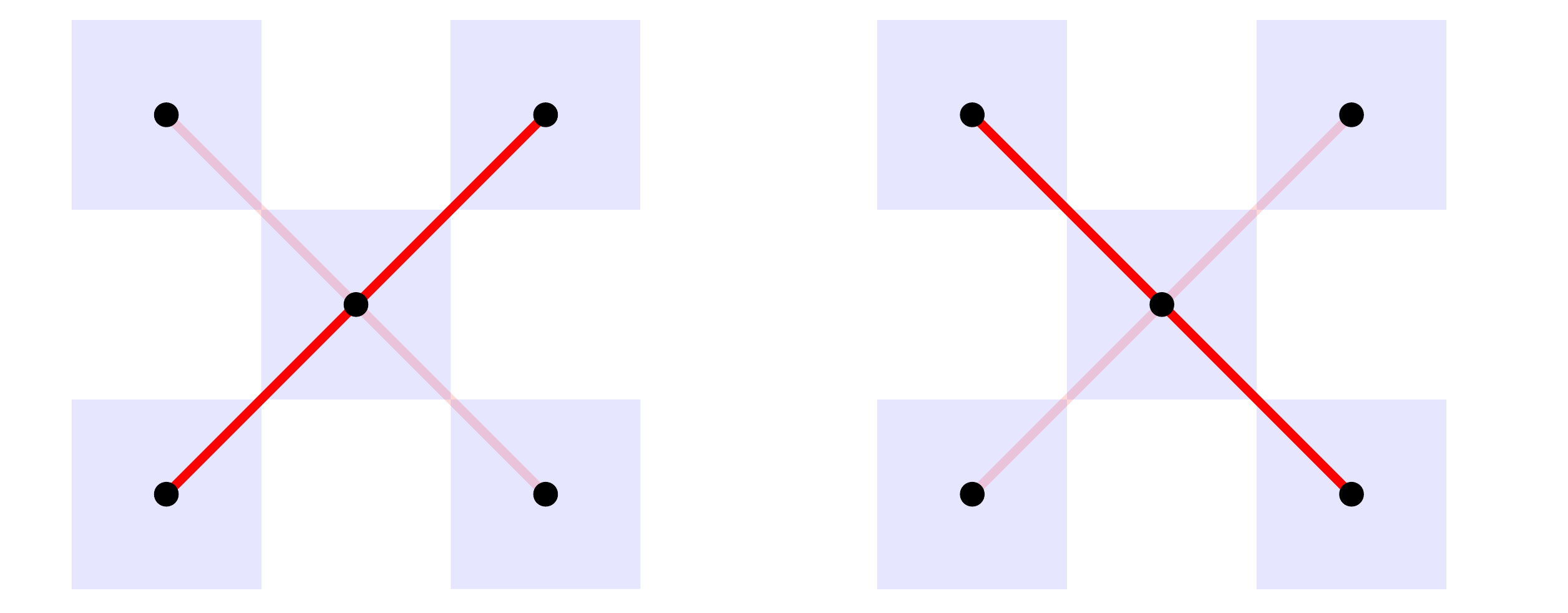}
  \footnotesize{
 \put(4,0){$\underbrace{\phantom{aaaaaaaaaaaaaaaaaaaaaaaaaa}}$}
  }
  \end{overpic}
  }&
  \multicolumn{2}{c}{
  \begin{overpic}[width=0.31\textwidth]{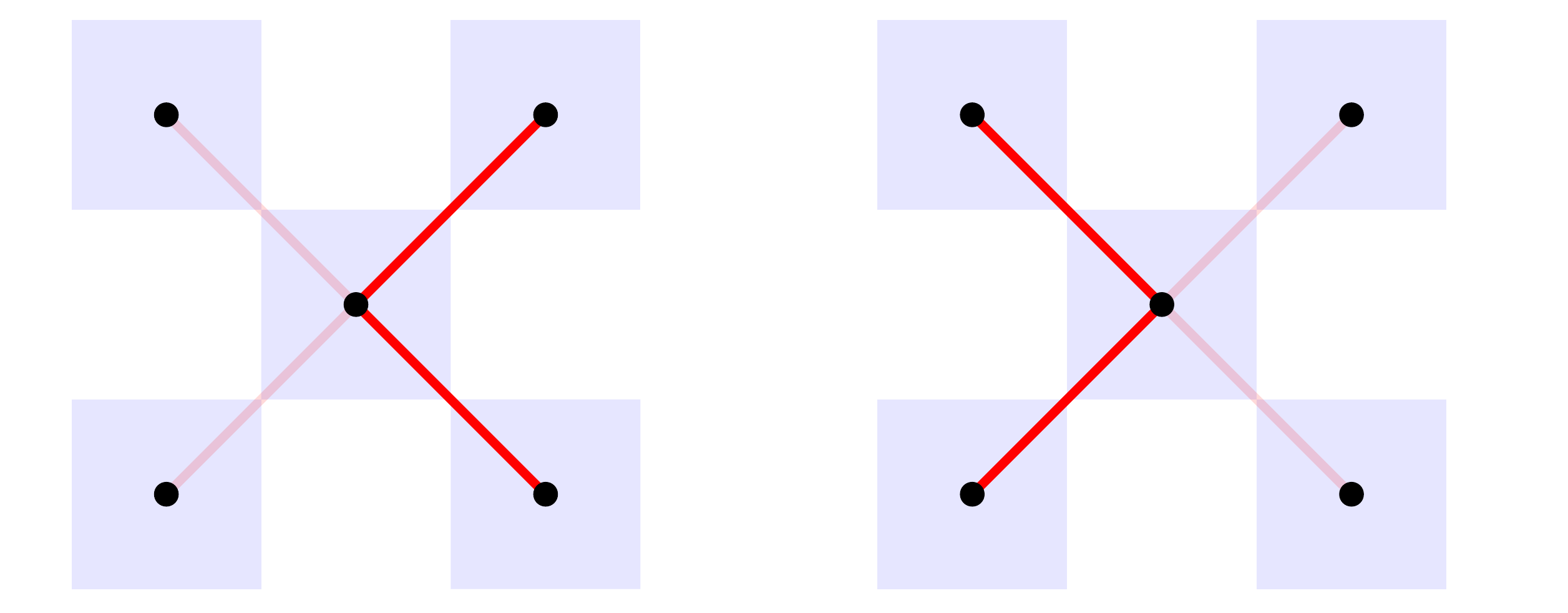}
  \footnotesize{
  \put(4,0){$\underbrace{\phantom{aaaaaaaaaaaaaaaaaaaaaaaaaa}}$}
  }
  \end{overpic}
  }\vspace{4pt}\\
  \multicolumn{2}{c}{$1$} &
  \multicolumn{2}{c}{$1/\sqrt{1+m^2\varepsilon^2}$} & %$1/(1+im\varepsilon)$} &
  \multicolumn{2}{c}{$-im\varepsilon/\sqrt{1+m^2\varepsilon^2}$}\\
  &&&&&\\
  \multicolumn{2}{c}{
  \begin{overpic}[width=0.3\textwidth]{ice0102.pdf}
  \footnotesize{
  \put(19.5,18){O}
  \put(-2, 38){H}
  \put(41, 38){H}
%% second part
  \put(76, 18){O}
  \put(54.5, -2){H}
  \put(97, -2){H}
  %\put(0, 41){$\overbrace{\phantom{aaaaaaaaaaaaaaaaaaaaaaaaaaaa,}}$}
  }
  \end{overpic}
  }&
  \multicolumn{2}{c}{
  \begin{overpic}[width=0.3\textwidth]{ice0304.pdf}
  \footnotesize{
  \put(19.5,18){O}
  \put(41, -2){H}
  \put(41, 38){H}
%% second part
  \put(76, 18){O}
  \put(54.5, -2){H}
  \put(54.5, 38){H}
  %\put(0, 41){$\overbrace{\phantom{aaaaaaaaaaaaaaaaaaaaaaaaaaaa,}}$}
  }
  \end{overpic}
  }&
  \multicolumn{2}{c}{
  \begin{overpic}[width=0.3\textwidth]{ice0506.pdf}
  \footnotesize{
  \put(19.5,18){O}
  \put(41, 38){H}
  \put(-2, -2){H}
%% second part
  \put(76, 18){O}
  \put(97, -2){H}
  \put(54.5, 38){H}
  %\put(0, 41){$\overbrace{\phantom{aaaaaaaaaaaaaaaaaaaaaaaaaaaa,}}$}
  }
  \end{overpic}
  }
  %\begin{overpic}[width=0.11\textwidth]{pictures/Ice01.pdf}
  %\footnotesize{
  %\put(44, 42){O}
  %\put(-6, 86){H}
  %\put(91, 86){H}
   %}   
  %\end{overpic}&
  %\begin{overpic}[width=0.11\textwidth]{pictures/Ice02.pdf}
  %\footnotesize{
  %\put(44, 42){O}
  %\put(-6, -6){H}
  %\put(91, -6){H}
   %}   
  %\end{overpic}&
  %\begin{overpic}[width=0.11\textwidth]{pictures/Ice03.pdf}
  %\footnotesize{
  %\put(44, 42){O}
  %\put(91, 86){H}
  %\put(91, -6){H}
   %}
  %%\end{overpic}&
  %\begin{overpic}[width=0.11\textwidth]{pictures/Ice04.pdf}
  %\footnotesize{
  %\put(44, 42){O}
  %\put(-6,86){H}
  %\put(-6,-6){H}
   %}    
  %\end{overpic}&
  %\begin{overpic}[width=0.11\textwidth]{pictures/Ice05.pdf}
  %\footnotesize{
  %\put(42,39){O}
  %\put(-6,-6){H}
  %\put(91,86){H}
   %}   
  %\end{overpic}&
  %%\footnotesize{
  %\put(44, 42){O}
  %\put(-6,86){H}
  %\put(91,-6){H}
   %}   
  %\end{overpic}
  \end{tabular}
  \caption{(Top) Feynman checkers as a six-vertex model. The steps %segments joining the consecutive points 
  of a checker path are thick red segments; %\mscomm{Make thick lines thick, and the circles smaller!} 
  the remaining segments joining diagonal neighbors are thin light.
  %The segments 
  Around each lattice point, they form one of the six configurations. The leftmost configuration does not appear but becomes possible in a more general model describing several moving electrons. We take the product of weights in the middle row over all lattice points. (Bottom) %The same model viewed as an ice model.
  %Six 
  The corresponding configurations of ice molecules at a node of the square crystal lattice.
  %\mscomm{Move the ice molecules to the bottom row, rotate them by $45^\circ$, align the $O$ atom with the central vertices, put the weights in the middle between each pair of figures!}
  }\label{fig-6v}
\end{figure}

Yet another equivalent definition is obtained as follows. %Join any two diagonal neighbors in our lattice by a segment. 
Given a checker path $p$, paint red all the steps.
%the segments joining the consecutive points of the path. 
The segments emanating from each lattice point (except for the starting point and the endpoint of the path) form one of the six configurations shown in Figure~\ref{fig-6v}. Assign complex weights to the configurations as in the middle row of the figure. Then each summand in~\eqref{a+-p} can be computed as the product of weights of all lattice points, and the sum can be taken over all configurations of red segments. 

The resulting model is a particular case of the famous \emph{six-vertex model} also known as \emph{ice model} %. Originally it was interpreted as a model of ice 
(see Figure~\ref{fig-6v} to the bottom). It is studied for various (usually, real) weights of vertices and various boundary conditions;  see, e.g.,~\cite{Duminil-Copin-etal-22}. In general, the limit as $\varepsilon\searrow 0$ is a widely open problem.
%hot research topic;

\appendix
\section{Technicalities.}
%\section{Existence of limits}
\label{appendix_correct}

\epigraph{All that might seem to you to be a waste of time --- some silly game for mathematicians only.}{R.P. Feynman~\cite{Feynman}}
%, QED: The strange theory of light and matter.}

%\mscomm{
%Insert sentences between lemmas to clarify the overall logic. 
%}

We aimed to make the introduction to quantum theory as simple as possible and to avoid any technicalities as long as we could. Now it is the time 
%In our model, we observe 
%Our model demonstrates 
to demonstrate yet another feature of quantum theory:
a rigorous justification of computation, especially convergence issues, often requires more effort than the computation itself\anonymize{ \cite{SU-22}}{}. We thus have to use less elementary mathematical tools in this appendix.
%\mscomm{Write smth on transfer matrices.}
%In order to prove the absolute convergence of sums in \eqref{a+-} we provide a correspondence between our toy model and Feynman checkers model, then prove some properties of this model and show that some mapping is the contraction one, and, finally prove the absolute convergence.

\begin{theorem}[The wave function is well-defined] \label{correctness}
%The wave function $a_{\pm}(x,t)$ is well-defined, in the sense that 
Both series in~(\ref{a+-}) converge absolutely.
\end{theorem}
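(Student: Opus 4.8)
The plan is to read the statement as the two nested claims that the inner series over light paths converges absolutely for each fixed $\tau$, and that the outer series over emission times $\tau$ then converges absolutely. Since $|e^{i\omega\tau}|=1$, the frequency $\omega$ is irrelevant for convergence, so it suffices to bound $\sum_{s}(m\varepsilon)^{\ell(s)}$ for fixed $\tau$, and afterwards to bound $\sum_{\tau<t}|a_\pm(x,t;\tau)|$, where $a_\pm(x,t;\tau)$ denotes the (already convergent) inner sum.

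First I would dispose of the inner sum. For fixed $\tau$ every light path from $(0,\tau)$ to $(x,t)$ projects onto one of the \emph{finitely many} checker paths $p$ with these endpoints, and the light paths lying over a given $p$ are obtained by inserting an arbitrary number $r_j\ge 0$ of repeated points (scatterings) at each interior lattice point, with $r_j\ge 1$ forced precisely at the turns of $p$. Grouping the sum by $p$ factorizes $\sum_{s}(m\varepsilon)^{\ell(s)}$ into a finite sum of products of geometric series $\sum_{r\ge 0}(m\varepsilon)^r$ and $\sum_{r\ge 1}(m\varepsilon)^r$, each convergent because $m\varepsilon<1$ by Definition~\ref{def-main}. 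Hence the inner series converges absolutely and $a_\pm(x,t;\tau)$ is well defined for every $\tau$.

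The heart of the matter is the outer sum, where I would exploit the \emph{coherent} cancellation encoded in Lemma~\ref{Dirac_1}. The one-step recurrence \eqref{eq-Dirac} is the action on $\ell^2(\varepsilon\mathbb Z)\otimes\mathbb C^2$ of the operator $U$ that shifts right- and left-movers and applies the coin $\tfrac{1}{1+im\varepsilon}\left(\begin{smallmatrix}1&-im\varepsilon\\-im\varepsilon&1\end{smallmatrix}\right)$; a direct check shows that $U$ is \emph{unitary} and translation invariant. Writing $N:=(t-\tau)/\varepsilon$, the confinement of the paths to the strip $0<x\le L$ means that $a_\pm(x,t;\tau)$ is exactly an entry of $U_s^{\,N-1}$ applied to the fixed unit source vector created by the first step, where $U_s:=P_sUP_s$ is the compression of $U$ to the finite-dimensional subspace $H_s$ spanned by the strip sites and $P_s$ is the orthogonal projection onto $H_s$. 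Therefore $|a_\pm(x,t;\tau)|\le \|U_s^{\,N-1}\|$.

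It then remains to prove the single estimate $\rho(U_s)<1$ on the spectral radius, which I expect to be the main obstacle. Since $U$ is unitary we have $\|U_s\|\le 1$, hence $\rho(U_s)\le 1$, and the point is to exclude eigenvalues of modulus $1$. If $U_sv=\lambda v$ with $v\in H_s$ and $|\lambda|=1$, then $\|P_sUv\|=\|v\|=\|Uv\|$, and because $P_s$ is an orthogonal projection this forces $Uv\in H_s$, so $Uv=\lambda v$: thus $v$ would be a finitely supported eigenvector of the translation-invariant unitary $U$. No such eigenvector exists; concretely, evaluating \eqref{eq-Dirac} for the left-moving component at the rightmost occupied site of $v$ (whose right neighbour is empty) forces that component to vanish, and then the right-moving component there vanishes as well, a contradiction. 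Consequently $\rho(U_s)<1$, so $\|U_s^{\,n}\|\le C\rho'^{\,n}$ for some $\rho'<1$, and $\sum_{\tau<t}|a_\pm(x,t;\tau)|\le C\sum_{N\ge 1}\rho'^{\,N-1}<\infty$. This yields the outer absolute convergence and completes the argument.
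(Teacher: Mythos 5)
Your proposal is correct, and its overall architecture coincides with the paper's: you handle the inner series exactly as in Lemma~\ref{inner_conv} (group the light paths over the finitely many checker paths with the same endpoints and sum geometric series, using $m\varepsilon<1$), and you reduce the outer series to a spectral bound for the strip transfer operator followed by a geometric estimate, as in Lemmas~\ref{Dirac_1T}--\ref{summ_conv}. Indeed, your $U_s=P_sUP_s$ is, up to two boundary columns, the paper's transfer matrix $\mathbf{T}$ of~\eqref{relation1}, and your claim $|a_\pm(x,t;\tau)|\le\|U_s^{N-1}\|$ is the paper's identity~\eqref{eq-T} plus the trivial bound of an entry by the norm. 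Where you genuinely diverge is the proof of the key spectral-radius lemma (Lemma~\ref{contraction}). The paper stays inside the finite-dimensional strip space: it sums the local probability-conservation identity of Lemma~\ref{local_prob} over the strip to get $\|a^{\lambda}\|^2=|a^{\lambda}_-(0)|^2+|a^{\lambda}_+(L+\varepsilon)|^2+\|\mathbf{T}a^{\lambda}\|^2$ for an eigenvector $a^{\lambda}$, and then splits into cases: if $a^{\lambda}_-(0)\ne 0$ the inequality is strict, while if $a^{\lambda}_-(0)=0$ an induction along the strip (equations \eqref{eq_15_}--\eqref{eq_16_}) forces $a^{\lambda}=0$. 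You instead invoke global unitarity of the full-line walk operator: $\|U_s\|\le 1$, and a modulus-one eigenvalue would, by Pythagoras for the orthogonal projection $P_s$, make the eigenvector a finitely supported eigenvector of the translation-invariant unitary $U$, which your rightmost-occupied-site induction rules out. The two arguments are close relatives --- each is unitarity plus a unique-continuation step --- but yours is slicker (no conservation-law summation, no case split on a boundary value), while the paper's never leaves finite dimensions and makes explicit the physically meaningful conservation law of Figure~\ref{fig: loc_prob}.

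One small repair your write-up needs: your $H_s$ contains only the sites $\varepsilon,\dots,L$, so for $x\in\{0,L+\varepsilon\}$ the value $a_\pm(x,t;\tau)$ --- in particular $a_-(0,t;\tau)$, which carries the reflection amplitude the whole paper is after --- is not literally an entry of $U_s^{N-1}$ applied to the source vector. It is, however, obtained from the strip vector one time step earlier by applying a unit row of the coin matrix, hence it is bounded by $\|U_s^{N-2}\|$, and your geometric summation goes through unchanged. (The paper avoids this bookkeeping by letting its space $\mathbb{C}^D$ carry the boundary sites $0$ and $L+\varepsilon$ as well.)
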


% First let us prove the absolute convergence of inner sums in \eqref{a+-}.  %One can note that we assigned infinitely many light path to a single path

\begin{lemma} \label{inner_conv} %The values $a_{\pm}(x,t;\tau)$ are well-defined, that is, %the corresponding 
% series~\eqref{a(T)+-} is absolutely convergent.
The inner series in \eqref{a+-} converge absolutely for $m\varepsilon<1$, that is, $a_{\pm}(x,t;\tau)$ is well-defined via~\eqref{a(T)+-}.
%and given by the formulas 
%\begin{align}
%\label{form2}
   % a_{\pm}(x,t,\tau) = \sum\limits_{ p = (p_1, \dots, p_k)}(-im\varepsilon)^{\mathrm{turns}(p,L)} \sum\limits_{n_2 = 0}^{\infty} (-im(x_2)\varepsilon)^{n_2}\dots \sum\limits_{n_{k} = 0}^{\infty} (-im(x_k)\varepsilon)^ {n_k},
%\end{align}
%where the outer sum is over simple paths $p$ from $(0,\tau)$ to $(x,t)$ and the vector from $p_{k-1}$ to $p_k$ equals either to $(\varepsilon, \varepsilon)$ or to $(-\varepsilon, \varepsilon)$ for $a_+(x,t,\tau)$ and $a_-(x,t,\tau)$ respectively.
\end{lemma}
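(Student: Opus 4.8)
The plan is to establish absolute convergence by bounding the sum of absolute values
\[
S:=\sum_{s:(0,\tau)\rightsquigarrow(x,t)}(m\varepsilon)^{\ell(s)},
\]
taken over \emph{all} light paths from $(0,\tau)$ to $(x,t)$ inside the strip (forgetting the direction constraint on the last step only enlarges $S$, so its finiteness implies the assertion for both $a_+$ and $a_-$). Since $\lvert(-im\varepsilon)^{\ell(s)}\rvert=(m\varepsilon)^{\ell(s)}$, finiteness of $S$ is exactly the claimed absolute convergence.

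The decisive structural fact is that every point of $s$ other than $(0,\tau)$ and $(x,t)$ lies in the \emph{finite} set
\[
V:=\{(x',t')\in\varepsilon\mathbb{Z}^2 : 0<x'\le L,\ \tau\le t'\le t\},
\]
finite precisely because the strip $0<x'\le L$ permits only $L/\varepsilon$ columns and $[\tau,t]$ only $(t-\tau)/\varepsilon+1$ rows. I would then introduce the transfer matrix $A$ on $\mathbb{C}^{V}$ with $A_{Q,P}=m\varepsilon$ when $Q-P$ is cooriented with $(1,1)$ or $(-1,1)$ or equals $(0,0)$, and $A_{Q,P}=0$ otherwise; thus each scattering contributes one factor $m\varepsilon$, and the self-loops $A_{P,P}=m\varepsilon$ encode the repeated points produced by zero steps. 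Grouping light paths by their intermediate vertices rewrites $S$ as $\langle f,(I-A)^{-1}g\rangle$ (plus the single-step term), where $g_P=m\varepsilon\,[(0,\tau)\to P]$ and $f_P=[P\to(x,t)]$ record the admissible first and last steps; this is the rigorous form of the ``summing geometric series at each point'' used to pass to \eqref{a(T)+-p}. Hence $S<\infty$ as soon as the Neumann series $\sum_{j\ge0}A^{j}=(I-A)^{-1}$ converges, i.e. as soon as the spectral radius satisfies $\rho(A)<1$.

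To bound $\rho(A)$ I would conjugate by the diagonal matrix $D=\operatorname{diag}\bigl(c^{\,t_{P}/\varepsilon}\bigr)_{P\in V}$ for a parameter $c\in(0,1)$ chosen later; since the spectral radius is a similarity invariant, $\rho(A)=\rho(\tilde A)$ with $\tilde A:=DAD^{-1}$ and $\tilde A_{Q,P}=m\varepsilon\,c^{(t_{Q}-t_{P})/\varepsilon}$ on admissible steps. Every admissible step raises time by $j\varepsilon$ with $j\ge0$, and a given $Q$ has exactly one predecessor with $j=0$ (namely $Q$ itself) and at most two with each $j\ge1$; hence every row sum is at most $m\varepsilon\bigl(1+\sum_{j\ge1}2c^{j}\bigr)=m\varepsilon\bigl(1+\tfrac{2c}{1-c}\bigr)$. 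As $c\searrow0$ this bound tends to $m\varepsilon<1$, so I may fix $c$ small enough that the bound is $<1$; then $\rho(A)=\rho(\tilde A)\le\|\tilde A\|_{\infty}<1$, and the lemma follows.

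The main obstacle is that light paths are not of bounded length: a path may stop arbitrarily many times (zero steps) and may contain arbitrarily large diagonal jumps $(\pm j\varepsilon,j\varepsilon)$. The unbounded number of stops is exactly what forces tracking a spectral radius rather than merely counting paths, while the large jumps defeat the naive estimate $\rho(A)\le\|A\|_{\infty}$, whose row sums grow with the number of available time levels. The time-graded weighting $c^{\,t_{P}/\varepsilon}$ is introduced precisely to charge each jump of temporal length $j\varepsilon$ a geometrically small factor $c^{j}$, restoring summability; and the finiteness of $V$ — hence the very existence of the finite transfer matrix — rests essentially on the strip confinement $0<x\le L$.
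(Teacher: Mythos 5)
Your proof is correct, but it takes a genuinely different route from the paper's. The paper's proof is an explicit local computation: it groups the light paths $s$ by the underlying checker path $p=p(s)$ (erase repetitions, merge collinear steps); for a fixed $p$, the sum over the multiplicities $n_1,\dots,n_{\ell(p)}$ of its interior points factors into a product of independent geometric series ($\sum_{n_i\ge1}(m\varepsilon)^{n_i}=m\varepsilon/(1-m\varepsilon)$ at turns, $\sum_{n_i\ge0}(m\varepsilon)^{n_i}=1/(1-m\varepsilon)$ elsewhere), giving $\sum_{s\colon p(s)=p}(m\varepsilon)^{\ell(s)}=(m\varepsilon)^{\mathrm{turns}(p)}/(1-m\varepsilon)^{\ell(p)}$; since there are only finitely many checker paths from $(0,\tau)$ to $(x,t)$, absolute convergence follows. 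You instead sum over all paths at once via a Neumann series for a transfer matrix $A$ on the finite window $V$, and prove $\rho(A)<1$ by a time-graded diagonal conjugation that charges each jump of temporal length $j\varepsilon$ a factor $c^{j}$ --- this correctly circumvents the failure of the naive row-sum bound caused by arbitrarily long steps, and the non-negativity of all weights justifies the regroupings. Both arguments must tame the same two infinities (arbitrarily many repeated points, arbitrarily long steps): the paper does it pointwise by explicit summation, you do it globally by a spectral estimate. What the paper's route buys is the closed-form weight per checker path, which is exactly the rewriting \eqref{a(T)+-p} reused in Section~\ref{Problems}; your route yields finiteness but no formula. What your route buys is uniformity of method: it is the same transfer-matrix-plus-spectral-radius strategy that the paper itself applies to the \emph{outer} series (Lemmas~\ref{contraction} and~\ref{summ_conv}), so your argument would unify the treatment of both series, and your diagonal-weighting bound is a clean alternative to the probability-conservation (unitarity) argument used there. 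One small side remark: your parenthetical claim that the resolvent identity is ``the rigorous form'' of the passage to \eqref{a(T)+-p} is slightly off --- that passage sums geometric series along a fixed checker path, i.e., it is precisely the paper's proof of this very lemma, not a Neumann series over the whole window.
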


\begin{proof}
Let $s$ be a light path from $(0,\tau)$ to $(x,t)$. Joining its consecutive points, we get a broken line (see Figure~\ref{Checker-paths-ex}). To the path $s$, assign the checker path $p=p(s)$ with the same broken line; thus $p$ is obtained from $s$ by removing the repetitions and subdividing the steps. Let $n_i$ be the number of times the $(i+1)$-th point of $p$ is repeated in $s$. If the $(i+1)$-th point is a turn then $n_i\ge 1$. Other points can have arbitrary multiplicities $n_i\ge 0.$ 
Therefore for any fixed checker path $p$ from $(0,\tau)$ to $(x,t)$,
\begin{align*}
%\label{ch_paths}
%\begin{split}
    \sum_{\substack{s: p(s)=p}}\left|-im\varepsilon\right|^{\ell(s)}
    =\sum_{\substack{s: p(s)=p}}
    (m\varepsilon)^{n_{1}+ \cdots+ n_{\ell(p)}}
    =\sum_{n_1}(m\varepsilon)^{n_{1}}\cdot\dots\cdot
    \sum_{n_{\ell(p)}}(m\varepsilon)^{n_{\ell(p)}}
    =\frac{(m\varepsilon)^{\mathrm{turns}(p)}}{ (1 - m\varepsilon)^{\ell(p)}},
%\end{split}
\end{align*}
where we sum over $n_i\ge 1$, if the $(i+1)$-th point in $p$ is a turn, and over $n_i\ge 0$, otherwise.

Since there are only finitely many checker paths between given points $(0,\tau)$ and $(x,t)$, we get
\begin{align*}
    \sum_{\substack{s: (0,\tau) \rightsquigarrow (x,t)}}|-im\varepsilon|^{\ell(s)}=
    \sum\limits_{\substack{p: (0,\tau) \rightsquigarrow (x,t)}}\sum\limits_{s:p(s)=p} |-im\varepsilon|^{\ell(s)} =\sum_{\substack{p: (0,\tau) \rightsquigarrow (x,t)}}
\frac{(m\varepsilon)^{\mathrm{turns}(p)}}{(1-m\varepsilon)^{\ell(p)}}  < \infty.
\\[-1.2cm]
\end{align*}
\end{proof}

To prove the convergence of the outer series in~\eqref{a+-}, we show that the wave functions $a_{\pm}(x,t;0)$ at times $t=2\varepsilon,3\varepsilon,\dots$ are obtained from the one at the time $\varepsilon$ by iterations of a linear operator $\mathbf{T}$. It is known as the \emph{transfer matrix} and is a classical tool for lattice models \cite{Duminil-Copin-etal-22}. Using a conservation law for the probability (see Figure~\ref{fig: loc_prob}), 
we then show %spectral radius $\rho (\mathbf{T})$ (
that the largest absolute value of eigenvalues of $\mathbf{T}$ is less than $1$, and hence the series converges.

The first step of this plan is the following lemma, proved analogously to Lemma \ref{Dirac_1}.
\begin{lemma}[Reccurence]
  \label{Dirac_1T}
For each lattice point $(x,t)$ with $0<x\le L$ and $t>0$, we have
%For each $t,L,\varepsilon,\tau$ such that $|m\varepsilon| < 1$ and $x = \varepsilon, 2\varepsilon, \dots , L$ the following equalities hold:
\begin{align*}
    \begin{pmatrix}
        a_-(x-\varepsilon,t+\varepsilon;0)\\
        a_+(x+\varepsilon,t+\varepsilon;0)
    \end{pmatrix} =
    \underbrace{\frac{1}{1+im\varepsilon}\begin{pmatrix}
1&-im\varepsilon\\
-im\varepsilon&1\\
\end{pmatrix}}_{\text{\large{$U$}}}
    \begin{pmatrix}
    a_-(x,t;0)\\
    a_+(x,t;0)
\end{pmatrix}.
\end{align*}  
\end{lemma}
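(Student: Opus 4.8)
The plan is to reduce this matrix identity to the two scalar recurrences already established in Lemma~\ref{Dirac_1}. The matrix equation is nothing but the pair of equalities~\eqref{eq-Dirac}, rewritten for the \emph{inner} sums $a_{\pm}(x,t;0)$ with the starting time $\tau=0$ held fixed rather than summed over. So I would prove the single scalar identity
\[
a_-(x-\varepsilon,t+\varepsilon;0)=\frac{1}{1+im\varepsilon}\,a_-(x,t;0)+\frac{-im\varepsilon}{1+im\varepsilon}\,a_+(x,t;0),
\]
together with its $a_+$-counterpart, and then read off the matrix $U$ by collecting coefficients.

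First I would repeat, essentially verbatim, the path-surgery argument from the proof of Lemma~\ref{Dirac_1}, now applied to light paths that all start at the single point $(0,0)$. To a path $s$ from $(0,0)$ to $(x,t)$ whose last step is cooriented with $(1,1)$ (the $a_+$ contribution) I attach the family $s_1,s_2,\dots$ obtained by appending $j-1$ repetitions of the point $(x,t)$ followed by the point $(x-\varepsilon,t+\varepsilon)$; to a path whose last step is cooriented with $(-1,1)$ (the $a_-$ contribution) I attach the same family together with the extra path $s_0$ that merely replaces the endpoint $(x,t)$ by $(x-\varepsilon,t+\varepsilon)$. Since appending points at the end never moves the starting point, one has $(-im\varepsilon)^{\ell(s_j)}=(-im\varepsilon)^{\,j}(-im\varepsilon)^{\ell(s)}$, and the geometric sums $\sum_{j\ge 1}(-im\varepsilon)^{j}=\tfrac{-im\varepsilon}{1+im\varepsilon}$ and $\sum_{j\ge 0}(-im\varepsilon)^{j}=\tfrac{1}{1+im\varepsilon}$ produce exactly the two coefficients above. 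The correspondence $s\mapsto\{s_j\}$ is a bijection onto all light paths terminating at $(x-\varepsilon,t+\varepsilon)$: given such a path, delete its last point and collapse the run of copies of $(x,t)$ to a single copy to recover $s$. Summing over all $s$ yields the first scalar identity, and the symmetric argument (last step toward $(x+\varepsilon,t+\varepsilon)$) yields the second.

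The only point requiring genuine care is the rearrangement of sums. Because repetitions of points are allowed, $a_{\pm}(x,t;0)$ is an \emph{infinite} series, so grouping its terms path-by-path and evaluating the geometric series in $j$ must be licensed by absolute convergence; this is precisely Lemma~\ref{inner_conv}, valid under the standing hypothesis $m\varepsilon<1$, which simultaneously guarantees that each geometric series $\sum_{j}(m\varepsilon)^{j}$ converges. The hypotheses on $(x,t)$ serve their usual bookkeeping purpose: $0<x\le L$ keeps the intermediate point $(x,t)$ inside the strip $0<x\le L$, so that it is a legitimate interior vertex of every appended path, while $t>0$ ensures $\tau=0<t$, so that the paths summed in $a_{\pm}(x,t;0)$ have positive duration. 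Having verified both scalar identities, I would finally package them as the displayed matrix equation, with $U$ the elementary transfer step out of which the transfer matrix $\mathbf{T}$ is built. I anticipate no essential obstacle beyond this convergence bookkeeping, precisely because the combinatorial core is identical to that of Lemma~\ref{Dirac_1}.
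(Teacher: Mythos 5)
Your proposal is correct and is precisely the paper's argument: the paper proves this lemma by stating it is "proved analogously to Lemma~\ref{Dirac_1}", and your write-up simply spells out that analogy (path surgery at $(x,t)$, geometric series summation, absolute convergence from Lemma~\ref{inner_conv} to justify regrouping), then packages the two scalar identities as the matrix $U$. No gaps.
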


%To prove the absolute convergence of outer sums in \eqref{a+-} %at first
%we introduce a mapping related to the recurrence relation (Lemma \ref{Dirac_1}), then prove that this mapping is contraction one. Then, we %introduce the notion \mscomm{notation???} for inner sums in \ref{a+-} and 
%show that the relation between inner sums in \eqref{a+-} indeed satisfies the recurrence relation and finally prove the convergence of outer sums.

%Let $$m(x):= 
%\begin{cases} 
%m, &\text{ if } 0 < x < L; \\
%0, &\text{ otherwise.}
%\end{cases}$$
Introduce the vector space $\mathbb{C}^{D}$ of functions $a^{\circ}(x)= (a_-^{\circ}(x), a_+^{\circ}(x))\colon\varepsilon \mathbb{Z}\to\mathbb{C}^2$ such that $a^{\circ}(x) = (0,0)$ for $x \notin [0,L+\varepsilon]$.
(As usual, we assume $L/{\varepsilon}\in\mathbb{Z}$ so that $D = 2{L}/{\varepsilon}+4$.)
Consider the linear operator $\mathbf{T}$ on this vector space that maps a function $a^{\circ}(x)$ to $b^{\circ}(x):=(b^{\circ}_-(x),b^{\circ}_+(x))$ given by %the formula 
%Consider a function $a^{\circ}(x):= (a^{\circ}_-(x),a^{\circ}_+(x)) \colon \varepsilon\mathbb{Z} \to \mathbb{C}^2$ such that $a^{\circ}(x) = (0,0)$ for $x \notin [0,L]$. Having $a^{\circ}(x)$ let us construct a new function $a^{\circ}(x,\varepsilon):= (a^{\circ}_-(x,\varepsilon),a^{\circ}_+(x,\varepsilon))$ as follows:
\begin{align}
\label{relation1}
\begin{pmatrix}
    b^{\circ}_-(x-\varepsilon)\\
    b^{\circ}_+(x+\varepsilon)\\
\end{pmatrix} :=
\begin{cases}
 U 
\begin{pmatrix}
    a^{\circ}_-(x)\\
    a^{\circ}_+(x)
\end{pmatrix}, & \text{if } x = \varepsilon, \dots, L; \\
%\begin{pmatrix}
%    a^{\circ}_-(x-\varepsilon,\varepsilon)\\
%    a^{\circ}_+(x+\varepsilon,\varepsilon)\\
%\end{pmatrix} = \begin{pmatrix}
%    0\\
%    a^{\circ}_+(x)
%\end{pmatrix}, &\text{if } x = 0; \\[0.5cm]
%\begin{pmatrix}
%    a^{\circ}_-(x-\varepsilon,\varepsilon)\\
%    a^{\circ}_+(x+\varepsilon,\varepsilon)\\
%\end{pmatrix} = \dfrac{-im\varepsilon}{1+im\varepsilon}\begin{pmatrix}
%    a^{\circ}_+(x)\\
%    0
%\end{pmatrix}, &\text{if } x = L; \\[0.5cm]
\begin{pmatrix}
    0\\
    0
\end{pmatrix}, &\text{otherwise.} 
\end{cases}
\end{align}
%a mapping $U$ that for each $x\in \varepsilon\mathbb{Z}$ we have
%\begin{align}
%\label{relation1}
%U \begin{pmatrix}
%    a^{\circ}_-(x-\varepsilon)\\
%    a^{\circ}_+(x+\varepsilon)\\
%\end{pmatrix} = \underbrace{\frac{1}%1&-im(x)\varepsilon\\
%-im(x)\varepsilon&1\\
%\end{pmatrix}}_{\text{\large{D}}} \begin{pmatrix}
%    a^{\circ}_-(x)\\
%    a^{\circ}_+(x)
%\end{pmatrix}.
%\end{align}
%Denote by $U$ the operator that maps $a^{\circ}(x)$ to %$a^{\circ}$$(x,\varepsilon)$.
%For $t\in \mathbb{Z}_{\ge 0}$ denote $\begin{pmatrix}
%    a^{\circ}_-(x,t)\\
%    a^{\circ}_+(x,t)
%\end{pmatrix} := U^t \begin{pmatrix}
%    a^{\circ}_-(x)\\
%    a^{\circ}_+(x)
%\end{pmatrix}.$
%\mscomm{\text{Why $x=\varepsilon,\dots,L+\varepsilon$ not $x=\varepsilon,\dots,L-\varepsilon$? The roles of $0$ and $L$ should be the same!}}
%For non-negative $t \in \varepsilon\mathbb{Z}$, denote $a^{\circ}(x,t):= \mathbf{T}^{t/\varepsilon}a^{\circ}(x)$.
Then Lemma~\ref{Dirac_1T}, together with an analog of Example~\ref{ex-bc}, can be rewritten in the form 
\begin{equation}\label{eq-T}
\begin{pmatrix}
    a_-(x,t+\varepsilon;0)\\
    a_+(x,t+\varepsilon;0)
\end{pmatrix}
=\mathbf{T}
\begin{pmatrix}
    a_-(x,t;0)\\
    a_+(x,t;0)
\end{pmatrix}
\qquad \text{for $t= \varepsilon, 2\varepsilon,\dots $.}
\end{equation}

\begin{lemma}[Local probability conservation] (See Figure~\ref{fig: loc_prob}.)
\label{local_prob}
    For each $x = \varepsilon, 2\varepsilon, \dots, L$ we have % the following equality holds:
    \begin{equation}
    \label{lprob_eq}
        \begin{aligned}
            |b^{\circ}_+(x+\varepsilon)|^2 + |b^{\circ}_-(x-\varepsilon)|^2 = |a^{\circ}_-(x)|^2 + |a^{\circ}_+(x)|^2
        \end{aligned}
    \end{equation}
\end{lemma}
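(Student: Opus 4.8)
The plan is to recognise that~\eqref{lprob_eq} is nothing but the assertion that the matrix $U$ from Lemma~\ref{Dirac_1T} preserves the Euclidean norm on $\mathbb{C}^2$, i.e. that $U$ is \emph{unitary}. First I would note that for $x=\varepsilon,2\varepsilon,\dots,L$ the definition~\eqref{relation1} of $\mathbf{T}$ falls into its first (nontrivial) case, so that the pair $(b^{\circ}_-(x-\varepsilon),b^{\circ}_+(x+\varepsilon))$ equals $U$ applied to the vector $v:=(a^{\circ}_-(x),a^{\circ}_+(x))$. Consequently the left-hand side of~\eqref{lprob_eq} is exactly the squared norm $\lVert Uv\rVert^2$, while the right-hand side is $\lVert v\rVert^2$. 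Thus the lemma reduces to the single identity $\lVert Uv\rVert=\lVert v\rVert$ for all $v\in\mathbb{C}^2$.

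It therefore suffices to verify $U^{*}U=I$, which is a one-line computation. Writing $U=\frac{1}{1+im\varepsilon}\left(\begin{smallmatrix}1&-im\varepsilon\\ -im\varepsilon&1\end{smallmatrix}\right)$, the conjugate transpose carries the factor $\frac{1}{\overline{1+im\varepsilon}}=\frac{1}{1-im\varepsilon}$ and the transposed, entrywise-conjugated matrix, so that
\[
U^{*}U=\frac{1}{|1+im\varepsilon|^{2}}\begin{pmatrix}1&im\varepsilon\\ im\varepsilon&1\end{pmatrix}\begin{pmatrix}1&-im\varepsilon\\ -im\varepsilon&1\end{pmatrix}=\frac{1+m^{2}\varepsilon^{2}}{1+m^{2}\varepsilon^{2}}\,I=I,
\]
where the matrix product produces the diagonal $(1+m^{2}\varepsilon^{2})I$ because the off-diagonal entries cancel, and $|1+im\varepsilon|^{2}=1+m^{2}\varepsilon^{2}$ cancels the scalar. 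This gives $\lVert Uv\rVert^2=\langle Uv,Uv\rangle=\langle v,U^{*}Uv\rangle=\lVert v\rVert^2$, which is precisely~\eqref{lprob_eq}.

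There is essentially no serious obstacle here; the only points meriting care are bookkeeping rather than ideas. One must check that the range $x=\varepsilon,2\varepsilon,\dots,L$ lands in the first branch of~\eqref{relation1} (so that $U$ genuinely acts, and no zero ``otherwise'' values intrude), and one must not be misled by the complex normalisation $\tfrac{1}{1+im\varepsilon}$: since $\left|\tfrac{1}{1+im\varepsilon}\right|=\tfrac{1}{\sqrt{1+m^{2}\varepsilon^{2}}}$, it contributes exactly the factor needed to cancel the $(1+m^{2}\varepsilon^{2})$ coming from the matrix part, so $U$ is unitary despite the $im\varepsilon$ entries. Geometrically this is the infinitesimal conservation of probability at a single scattering event that propagates, via the transfer matrix~\eqref{eq-T}, into the global bound on the spectral radius used in the convergence argument.
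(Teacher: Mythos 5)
Your proposal is correct and follows exactly the paper's argument: the paper's proof is the one-liner that \eqref{relation1} holds and the matrix $U$ is unitary, hence \eqref{lprob_eq} follows. You merely spell out the verification $U^{*}U=I$ that the paper leaves implicit, which is a fine (and correct) addition.
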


\begin{proof}
    Indeed, since \eqref{relation1} holds and the matrix $U$ is unitary, it follows that~\eqref{lprob_eq} holds.
%\begin{align*}
           % &P_+(x+\varepsilon, t+\varepsilon,\tau) = |a_+(x+\varepsilon,t+\varepsilon,\tau)|^2 = \left|\frac{-im\varepsilon}{1+im\varepsilon}a_-(x,t,\tau)
%+\frac{1}{1+im\varepsilon}a_+(x,t,\tau)\right|^2 =\\
       %&=\frac{m^2\varepsilon^2|a_-(x, t,\tau)|^2 + |a_+(x, t,\tau)|^2 + im\varepsilon \left(\bar{a}_-(x,t,\tau)a_+(x,t,\tau) - \bar{a}_+(x,t,\tau)a_-(x,t,\tau)\right)}{1+m^2\varepsilon^2};\\
    %&P_-(x-\varepsilon, t+\varepsilon,\tau) = |a_-(x-\varepsilon,t+\varepsilon,\tau)|^2 = \left|\frac{-im\varepsilon}{1+im\varepsilon}a_+(x,t,\tau)
%+\frac{1}{1+im\varepsilon}a_-(x,t,\tau)\right|^2 =\\
      % &=\frac{m^2\varepsilon^2|a_+(x, t,\tau)|^2 + |a_-(x, t,\tau)|^2 + im\varepsilon \left(\bar{a}_+(x,t,\tau)a_-(x,t,\tau) - \bar{a}_-(x,t,\tau)a_+(x,t,\tau)\right)}{1+m^2\varepsilon^2}.
%\end{align*}
\end{proof}

One can think of $|a^\circ_-(x)|^2$ and $|a^\circ_+(x)|^2$ as the `amount of light' entering a lattice point $(x,t)$ from its bottom-right and bottom-left diagonal neighbors respectively, and 
interpret $|b^\circ_{-}(x)|^2$ and $|b^\circ_+(x)|^2$ as analogous quantities for the point $(x,t+\varepsilon)$. See Figure~\ref{fig: loc_prob} to the left. Thus Lemma~\ref{local_prob} means that the `amount of light' entering a lattice point equals the one leaving it. 

\begin{figure}[h!]
%\hspace{0.9cm}
%    \begin{overpic}[width=0.6\linewidth]{pictures/Local_cv.pdf}
%        \put(-8,20){\small{$|a_+^{\circ}(x, t - \varepsilon)|^2$}}
%        \put(-8,30){\small{$|a_-^{\circ}(x-\varepsilon, t)|^2$}}
%        \put(29,30){\small{$|a_+^{\circ}(x+\varepsilon, t)|^2$}}
%        \put(29,20){\small{$|a_-^{\circ}(x, t-\varepsilon)|^2$}}
%    \end{overpic}
%\hspace{-2cm}
%    \begin{overpic}[width=0.6\linewidth]{pictures/cons_gl.pdf}
%     \put(4,0){$0$}
%     \put(36,0){$L$}
%%     \put(0, 40){$\tau$}
%    \end{overpic}
\centering
\begin{overpic}[width=0.66\linewidth]{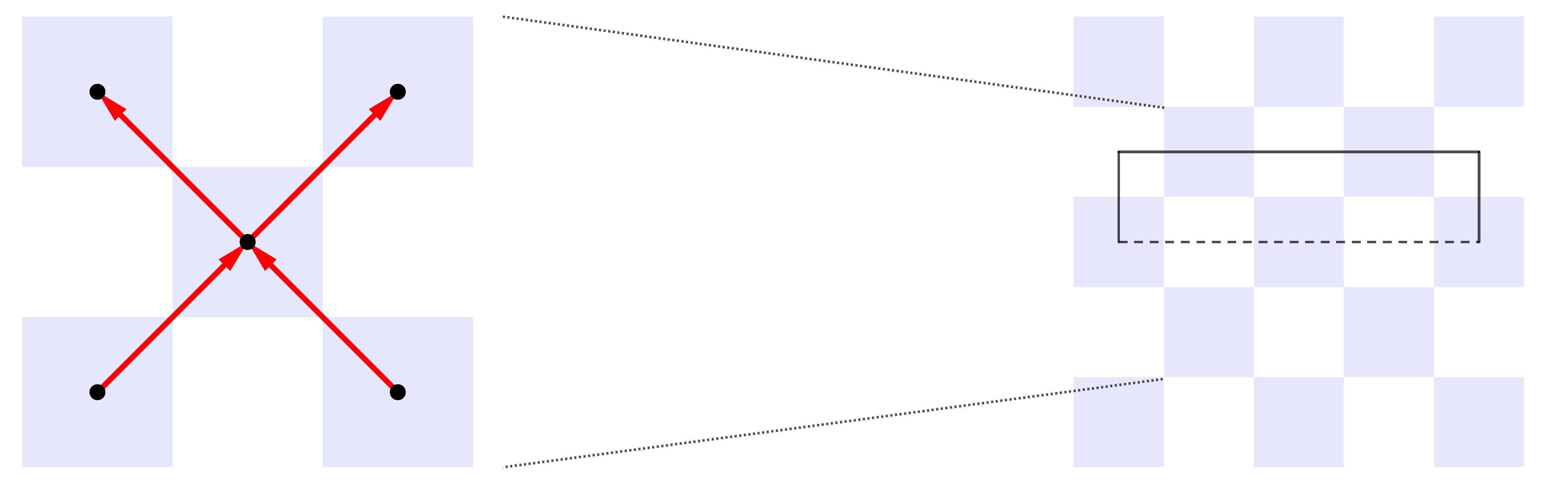}
\footnotesize{
    \put(0,13){$|a_+^{\circ}(x)|^2$}
    \put(-5,18){$|b_-^{\circ}(x-\varepsilon)|^2$}
    \put(21,18){$|b_+^{\circ}(x+\varepsilon)|^2$}
    \put(21,13){$|a_-^{\circ}(x)|^2$}
    \put(71,-1){$0$}
    %\put(66.5, 3){$0$}
    \put(91.5,-1){$L+\varepsilon$}
    %\put(66, 27){$T$}
}
\end{overpic}
    \caption{(Left) The local probability conservation: the sum of the probabilities that a photon moves to a point from the left and the right equals the sum of the probabilities it moves from the point to the left and the right. See Lemma \ref{local_prob}. (Right) The global probability conservation: the probability of finding a photon on the dashed line equals the probability of finding it on the thick line. We assume that the photon is absorbed at the points $x=0$ and $x=L+\varepsilon$. %See Corollary~\ref{prob_cons}. 
    }
    \label{fig: loc_prob}
\end{figure}

%\mscomm{Define the spectral radius $\rho(\mathbf{T})$.}
%For the next step, 
To proceed, consider an eigenvalue $\lambda$ of $\mathbf{T}$ with the maximal absolute value; then $\rho(\mathbf{T}):= |\lambda|$ is called \emph{the spectral radius of $\mathbf{T}$}.
%\focomm{In order to deal with the following lemmas let us define \emph{the spectral radius $\rho(T)$ of the linear operator $T$} as the largest absolute value of its eigenvalue.} 

\begin{lemma}[Spectral-radius bound] \label{contraction} The %transfer 
operator $\mathbf{T}$ defined by~\eqref{relation1} has spectral radius $\rho (\mathbf{T})<1$.
%The mapping $\mathbf{T}$ is a contraction mapping.
%\focomm{The mapping $\mathbf{T}$ has the spectral radius $\rho (\mathbf{T})<1$.}
%\mscomm{Replace by $\rho(\mathbf{T})<1$.}
\end{lemma}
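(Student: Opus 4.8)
The plan is to combine the local probability conservation with a discrete unique-continuation argument: first show that $\mathbf{T}$ is a contraction for the standard Hermitian norm, so that $\rho(\mathbf{T})\le 1$, and then rule out unimodular eigenvalues. Equip the finite-dimensional space $\mathbb{C}^{D}$ with the norm $\|a^{\circ}\|^2=\sum_{x}\bigl(|a_-^{\circ}(x)|^2+|a_+^{\circ}(x)|^2\bigr)$, the sum running over $x\in\{0,\varepsilon,\dots,L+\varepsilon\}$. First I would globalize Lemma~\ref{local_prob}: summing~\eqref{lprob_eq} over $x=\varepsilon,2\varepsilon,\dots,L$ and noting that, by~\eqref{relation1}, every nonzero component of $b^{\circ}=\mathbf{T}a^{\circ}$ occurs exactly once on the left-hand side (the remaining components are set to $0$ by the `otherwise' clause), one gets the global identity
\begin{equation*}
\|\mathbf{T}a^{\circ}\|^2=\|a^{\circ}\|^2-|a_-^{\circ}(0)|^2-|a_+^{\circ}(0)|^2-|a_-^{\circ}(L+\varepsilon)|^2-|a_+^{\circ}(L+\varepsilon)|^2 .
\end{equation*}
Thus $\|\mathbf{T}a^{\circ}\|\le\|a^{\circ}\|$ and $\rho(\mathbf{T})\le 1$; physically, the only loss of `amount of light' per step is what escapes through the absorbing boundary points $x=0$ and $x=L+\varepsilon$ (Figure~\ref{fig: loc_prob}, right).

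The main point is to sharpen this to a strict inequality. Suppose, for contradiction, that $\mathbf{T}v=\lambda v$ with $v\neq 0$ and $|\lambda|=1$. Inserting $v$ into the global identity and using $\|\mathbf{T}v\|=|\lambda|\,\|v\|=\|v\|$ forces all four boundary terms to vanish, so $v_{\pm}(0)=v_{\pm}(L+\varepsilon)=0$. Reading $\mathbf{T}v=\lambda v$ componentwise through~\eqref{relation1} and using $\lambda\neq 0$, the `otherwise' clause additionally gives $v_+(\varepsilon)=0$ and $v_-(L)=0$.

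Finally I would run a discrete unique-continuation induction. For $x=\varepsilon,\dots,L$ the eigenvalue equation, multiplied by $1+im\varepsilon$, reads
\begin{align*}
\lambda(1+im\varepsilon)\,v_-(x-\varepsilon)&=v_-(x)-im\varepsilon\,v_+(x),\\
\lambda(1+im\varepsilon)\,v_+(x+\varepsilon)&=-im\varepsilon\,v_-(x)+v_+(x).
\end{align*}
Starting from $v_-(0)=0$ and $v_+(\varepsilon)=0$, I would prove by induction on $k=1,\dots,L/\varepsilon$ that $v_-(k\varepsilon)=0$ and $v_+((k+1)\varepsilon)=0$: the first equation at $x=k\varepsilon$ forces $v_-(k\varepsilon)=0$ given the inductive hypothesis, and then the second equation forces $v_+((k+1)\varepsilon)=0$. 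Together with the remaining boundary values $v_-(L+\varepsilon)=0$ and $v_+(0)=0$, this makes $v\equiv 0$, contradicting $v\neq 0$. Hence every eigenvalue of $\mathbf{T}$ has modulus $<1$, i.e. $\rho(\mathbf{T})<1$. I expect this strictness step to be the real obstacle: since the scattering matrix $U$ lets amplitude reflect back and forth inside the strip, $\mathbf{T}$ is \emph{not} nilpotent, so the soft bound $\rho(\mathbf{T})\le 1$ cannot be upgraded by a finite-time exit argument and genuinely requires propagating the boundary zeros of an eigenvector across the whole lattice.
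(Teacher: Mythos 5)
Your proof is correct and follows essentially the same route as the paper's: both combine the globalized probability-conservation identity (Lemma~\ref{local_prob} summed over $x=\varepsilon,\dots,L$, with the boundary components killed by the ``otherwise'' clause of~\eqref{relation1}) with an induction that propagates boundary zeros of a putative eigenvector through the componentwise eigenvalue equations. The only differences are organizational --- the paper splits into cases according to whether $a^{\lambda}_-(0)\neq 0$, while you first prove $\rho(\mathbf{T})\le 1$ and then rule out unimodular eigenvalues via the equality case --- and your global identity correctly retains all four boundary terms $|a^{\circ}_{\pm}(0)|^2$, $|a^{\circ}_{\pm}(L+\varepsilon)|^2$, two of which the paper's displayed relation omits (harmlessly, since they are nonnegative).
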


 \begin{proof} Let $\lambda$ be an eigenvalue of $\mathbf{T}$ with the maximal absolute value, so that $\rho(\mathbf{T})=|\lambda|,$ and %We need to prove that the absolute value of any eigenvalue $\lambda$ of the linear mapping $\mathbf{T}$ is less than $1$. Let 
let $$a^{\lambda}=(a^{\lambda}_-(0), a^\lambda_-(\varepsilon), \dots, a^{\lambda}_-(L+\varepsilon), a^{\lambda}_+(0),\dots,a^{\lambda}_+(L+\varepsilon))$$ be a  corresponding eigenvector. %Consider 2 cases.
If $a^{\lambda}_-(0) \ne 0$ then the summation of equations~\eqref{lprob_eq} from Lemma~\ref{local_prob} over all $x$ (see Figure~\ref{fig: loc_prob} to the right)
and the conditions $b^\circ_{+}(0)=b^\circ_{+}(\varepsilon)=b^\circ_{-}(L)=b^\circ_{-}(L+\varepsilon)=0$ given by~\eqref{relation1} lead to 
%by Corollary \ref{prob_cons} we have
    \begin{align*}
        \|a^{\lambda}\|^2 = |a_-^{\lambda}(0)|^2 + |a_+^{\lambda}(L+\varepsilon)|^2 + \|\mathbf{T}a^{\lambda}\|^2 > \|\mathbf{T}a^{\lambda}\|^2 = |\lambda|^2 \|a^{\lambda}\|^2,
    \end{align*}
    hence $|\lambda|<1$.
If $a_-^{\lambda}(0) = 0$ then we can rewrite definition~\eqref{relation1} of operator $\mathbf{T}$ 
in the following form:    
        \begin{align}
    \label{eq_15_}
        &\lambda a_+^{\lambda}(x) = \frac{a_+^{\lambda}(x-\varepsilon) - im\varepsilon a_-^{\lambda}(x - \varepsilon)}{1+im\varepsilon}\qquad(x=2\varepsilon,3\varepsilon,\ldots, L+\varepsilon);\\
    \label{eq_16_}
        &a_-^{\lambda}(x)  = \lambda(1+im\varepsilon)a_-^{\lambda}(x-\varepsilon) + im\varepsilon a_+^{\lambda}(x )\qquad(x=\varepsilon,2\varepsilon,\ldots, L).
    \end{align}
    Since $a_+^{\lambda}(\varepsilon)=0$, from \eqref{eq_15_}--\eqref{eq_16_} by induction it follows that $a^{\lambda}$ is identically %equal to 
    zero, which contradicts the definition of the eigenvector.
\end{proof}

\begin{lemma}
    \label{summ_conv}
    %\mscomm{The sum also depend on $a^\circ(x,0)$ as a parameter --- the quantifier should be explicit!} 
%     For arbitrary $a^{\circ}(x) \in \mathbb{C}^D$ and $x \in \varepsilon\mathbb{Z}$ the series 
% $
%     \sum\limits_{\substack{t\in \varepsilon\mathbb{Z},\\ t \ge 0 }}a^{\circ}_{\pm}(x,t)
% $
% converges absolutely.
For each %an arbitrary 
sequence $a^{\circ}(x,0),a^{\circ}(x,\varepsilon),a^{\circ}(x,2\varepsilon),\dots \in \mathbb{C}^D$ satisfying %the identity 
$a^{\circ}(x,t+\varepsilon)=\mathbf{T}a^{\circ}(x,t)$ and each $x \in \varepsilon\mathbb{Z}$, the series 
$
    %\sum\limits_{\substack{t\in \varepsilon\mathbb{Z},\\ t \ge 0 }}
    \sum_{t\in \varepsilon\mathbb{Z},t \ge 0 }
    a^{\circ}_{\pm}(x,t)
$
converges absolutely.
\end{lemma}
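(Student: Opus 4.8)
The plan is to combine the spectral-radius bound of Lemma~\ref{contraction} with the observation that iterating the recurrence $a^{\circ}(x,t+\varepsilon)=\mathbf{T}a^{\circ}(x,t)$ gives $a^{\circ}(\,\cdot\,,k\varepsilon)=\mathbf{T}^{k}a^{\circ}(\,\cdot\,,0)$ for every $k=0,1,2,\dots$. First I would rewrite the series in question, for fixed $x$, as $\sum_{k=0}^{\infty}a^{\circ}_{\pm}(x,k\varepsilon)$, whose $k$-th term is a single coordinate of the vector $\mathbf{T}^{k}a^{\circ}(\,\cdot\,,0)\in\mathbb{C}^{D}$. Since $\mathbb{C}^{D}$ is finite-dimensional, each coordinate is dominated by the Euclidean norm, so it suffices to prove that $\|\mathbf{T}^{k}a^{\circ}(\,\cdot\,,0)\|\le\|\mathbf{T}^{k}\|\,\|a^{\circ}(\,\cdot\,,0)\|$ decays geometrically in $k$.

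The key input is Gelfand's spectral-radius formula $\lim_{k\to\infty}\|\mathbf{T}^{k}\|^{1/k}=\rho(\mathbf{T})$. By Lemma~\ref{contraction} we have $\rho(\mathbf{T})<1$, so one may fix any $r$ with $\rho(\mathbf{T})<r<1$; Gelfand's formula then yields an index $K$ with $\|\mathbf{T}^{k}\|\le r^{k}$ for all $k\ge K$. Consequently $|a^{\circ}_{\pm}(x,k\varepsilon)|\le r^{k}\|a^{\circ}(\,\cdot\,,0)\|$ for $k\ge K$, and comparison with the convergent geometric series $\sum_{k}r^{k}$ establishes absolute convergence of the tail; the finitely many terms with $k<K$ are irrelevant to convergence.

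The only delicate point is that $\mathbf{T}$ need not be normal, or even diagonalizable, so one cannot simply assert $\|\mathbf{T}^{k}\|\le\rho(\mathbf{T})^{k}$: transient growth or polynomial factors coming from Jordan blocks may push $\|\mathbf{T}^{k}\|$ above $\rho(\mathbf{T})^{k}$ for small $k$. This is exactly why I would invoke Gelfand's formula and pass to a ratio $r$ strictly larger than $\rho(\mathbf{T})$, which absorbs any such polynomial factor while staying below $1$. Beyond correctly applying this spectral-radius estimate, the remainder is a routine geometric comparison, so I expect no substantial obstacle.
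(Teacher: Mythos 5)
Your proof is correct and takes essentially the same route as the paper's: bound the coordinate by the Euclidean norm, then by $\|\mathbf{T}^{k}\|\cdot\|a^{\circ}(\,\cdot\,,0)\|$, and combine Gelfand's formula with Lemma~\ref{contraction} to obtain geometric decay. The paper merely compresses your explicit choice of $r\in(\rho(\mathbf{T}),1)$ and geometric comparison into a citation of the Cauchy--Hadamard (root test) theorem, so the content is identical.
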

\begin{proof}
    %Denote by $\|a\|$ the Hermitean norm of a vector $a\in\mathbb{C}^D$ and by $\|T\|$ the norm of a linear operator $T\colon \mathbb{C}^D\to\mathbb{C}^D$.
    %applied to a vector is just norm in $\mathbb{C}^{d}$, while $\|\cdot\|$ applied to an operator is the standard operator norm.
    We have
\begin{align*}
&%\sum\limits_{\substack{t\in \varepsilon\mathbb{Z},\\ t \ge 0}}
|a^{\circ}_{\pm}(x,t)| \le 
%\sum\limits_{\substack{t\in \varepsilon\mathbb{Z},\\ t \ge 0}}
\|a^\circ(x,t)\|=  
%\sum\limits_{\substack{k \in \mathbb{Z},\\ k \ge 0}}
\|\mathbf{T}^{t/\varepsilon} a^\circ(x,0)\| \le 
%\sum\limits_{\substack{k \in \mathbb{Z},\\ k \ge 0}}
\|\mathbf{T}^{t/\varepsilon}\| \cdot \|a^\circ(x,0)\|.
%!!! NOT TRUE: \le \sum\limits_{\substack{k \in \mathbb{Z},\\ k \ge 0}}|\lambda^k| \cdot \|A(0)\| < \infty,
\end{align*}
Here $\|\cdot\|$ is the Euclidean norm if the argument is a vector in $\mathbb{C}^D$, and the standard operator norm if the argument is an %linear 
operator in $\mathbb{C}^D$.
%where $A(t):= \left(a^{\circ}_-(0,t), a^{\circ}_+(0,t)\dots, a^{\circ}_+(L+\varepsilon, t) \right)$. \mscomm{replace by $a^\circ(x,t)$! } 
%\aucomm{we used distinct numeration in eigenvalue.}
By Gelfand's formula for the spectral radius and Lemma~\ref{contraction}, we get $
\|\mathbf{T}^k\|^{1/k}\to \rho(\mathbf{T})<1$ as $k\to\infty$.
Thus the series converges absolutely by the Cauchy–Hadamard theorem. 
\end{proof}

\begin{proof}[Proof of Theorem \ref{correctness}] 
The absolute convergence of inner series in~\eqref{a+-} %the definition of $a_{\pm}(x,t)$ 
is proved in Lemma \ref{inner_conv}. The convergence of the outer series follows from
%So, we are left to prove the convergence of the series
%\begin{align}
%\label{outer1}
%    \sum\limits_{\substack{\tau\in \varepsilon \mathbb{Z},\\ \tau<t}} |a_{\pm}(x,t;\tau)|.
%\end{align}
%
%Since $a_{\pm}(x,t+\tau;\tau) = e^{i\omega \tau} a_{\pm}(x,t;0)$ and $|e^{i\omega \tau}|=1$ it follows that we can rewrite \eqref{outer1} in the following form
\begin{align*}
    \sum\limits_{\substack{\tau\in \varepsilon \mathbb{Z},\\ \tau<t}} |a_{\pm}(x,t;\tau)|=
    \sum\limits_{\substack{\tau\in \varepsilon \mathbb{Z},\\ \tau<t}} |a_{\pm}(x,t-\tau;0)| = \sum\limits_{\substack{\Delta\in \varepsilon \mathbb{Z},\\ \Delta>0}} |a_{\pm}(x,\Delta;0)| < \infty. 
\end{align*}
Here the first equality holds because $a_{\pm}(x,t;\tau) = e^{i\omega \tau} a_{\pm}(x,t-\tau;0)$ and $|e^{i\omega \tau}|=1$. The second one is a change of the summation index. The last estimate follows from~\eqref{eq-T} and Lemma~\ref{summ_conv}.
%the mapping $a(x,\tau';0) \mapsto a(x,\tau'+\varepsilon;0)$ satisfies \eqref{relation1} and Lemma \ref{summ_conv} applied to it.
%
% By Lemma \ref{Dirac_1T} the mapping $a(x,t;\tau) \mapsto a(x,t+\varepsilon;\tau)$ satisfies \eqref{relation1} and we can rewrite \eqref{outer1} in the following form
% \begin{align*}
%     \sum\limits_{\substack{\tau\in \varepsilon \mathbb{Z},\\ \tau<t}} |\mathbf{T}^{(t-\tau-\varepsilon)/\varepsilon}a_{\pm}(x,\tau+\varepsilon;\tau)| =  \sum\limits_{\substack{\tau\in \varepsilon \mathbb{Z},\\ \tau<t}} |\mathbf{T}^{(t-\tau-\varepsilon)/\varepsilon}a_{\pm}(x,\varepsilon;0)| < \infty.
% \end{align*}
% Here the first equality holds because $a_{\pm}(x,\tau+\varepsilon;\tau) =e^{i\omega \tau} a_{\pm}(x,\varepsilon;0)$, $\mathbf{T}$ is linear and $|e^{i\omega \tau}|^2 = 1$. The last estimate is obtained by applying  Lemma \ref{summ_conv} to the function $a(x,t;0)$.
\end{proof}

Let us give a few final remarks. %It is easy to show that the conservation law in the form of Corollary \ref{prob_cons} does not hold for $a(x,t):= (a_+(x,t), a_-(x,t))$. However, one can prove its analog (see \cite[Problem~17]{LKTG}). Now the fact that the reflection and transmission probabilities sum to 1, i.e. $|a_-(0,T)|^2 + |a_+(L+\varepsilon,T)|^2 = 1$, is just a straightforward corollary of this analog. Indeed, one has to consider a rectangle $[0, L]\times[0, T]$ for large enough $T$ and write out the conservation law for it.
%%One can easily check that Corollary \ref{prob_cons} does not hold for $a(x,t)$. However one can  use similar ideas to prove that reflection and transmission probabilities sum to 1, that is, $|a_-(x,0)|^2 + |a_+(x,L+\varepsilon)|^2 = 1$. Let us sketch a proof of this fact. In contrast to $a^{\circ}(x,t)$, ''the amount of light'' entering the rectangle $[0, L]\times[0,\tau]$ differs for different values of $\tau$: for each $\varepsilon$ ''units'' of time there is one additional ''unit'' of light. Thus we divide $a_{\pm}(x)$ by $\sqrt{\tau}$ and consider the 'normalized amount of light'. We note that the 'normalized amount of light'', that passed through horizontal sides of this rectangle is insignificant in comparison to the 'normalized amount' that passed through the vertical sides for big enough $\tau$. Hence, we obtain that 'the normalized amount of light' that left our rectangle through the left side (i.e reflection probability) plus 'the normalized amount of light' that left through the right side (i.e transmission probability)  equals 'the normalized amount of light' that entered the rectangle through these sides, which equals to 1. 
The conservation law (Lemma~\ref{local_prob}) explains the form of equation~\eqref{eq-def-model}, where each scattering contracts and rotates the arrow $a(s)$ through the right angle. This should be the case for transparent materials such as glass, otherwise, the probability conservation %(Lemma~\ref{local_prob}) 
would not hold, meaning partial absorption (or even emission) of light %by the glass 
\cite[Figure~69]{Feynman}. This also shows that we must have allowed repeating points in light paths, otherwise, the probability conservation would be again violated. If one tries to fix this by a normalization factor $1/\sqrt{1+m^2\varepsilon^2}$ as in~\eqref{a+-p}, then one gets an incorrect dependence of the reflection probability on the glass thickness in the final formula.

Yet another open problem is finding the eigenvalues and eigenfunctions of the transfer matrix~$\mathbf{T}$.

\anonymize{
\subsection*{Acknowledgements.} 
%The work was prepared within the Russian Science Foundation grant N22-41-05001, \url{https://rscf.ru/en/project/22-41-05001/}. \mscomm{Replace by NUG??? +Basis???}
Sections \ref{experiment}--\ref{proofs} were prepared within the framework of the Academic Fund Program at HSE University in 2023--2024 (grant № 23-00-002).
Sections \ref{Problems}--\ref{appendix_correct} were prepared under support by the Theoretical Physics and Mathematics
Advancements Foundation “Basis” grant 21-7-2-19-1. 
This work was originally presented at the Summer Conference of Tournament of Towns as a collection of problems for pupils \cite{LKTG}. Two participants (less than 16 years old at that time) came up with new results and published them later as appendices in survey~\cite{SU-22}. The work was also presented at %special courses at HSE University and the Independent University of Moscow and 
numerous lectures for pupils %by the second author 
in 2020-2022. 
We are grateful to all participants, and also to Grigory Chelnokov and Fedor Kuyanov for useful discussions. 
}{}

\anonymizelong

\noindent
\textsc{Fedor Ozhegov\\
HSE University}\\
\texttt{ozhegov.fedor\,@\,gmail$\cdot $com} %\mscomm{replace by HSE email?}

\vspace{0.3cm}
%\vspace{-0.2cm}
\noindent
\textsc{Mikhail Skopenkov\\
%ORCID: 0000-0003-2453-0009
King Abdullah University of Science and Technology}
%, and\\ HSE University}
%, and\\
%Institute for Information Transmission Problems of the Russian Academy of Sciences \mscomm{remove or not???}} %\\
%Bolshoy Karetny per. 19, bld. 1, Moscow, 127994, Russian Federation
\\
\texttt{mikhail.skopenkov\,@\,gmail$\cdot $com} \quad \url{https://users.mccme.ru/mskopenkov/}

\vspace{0.3cm}
\noindent
\textsc{Alexey Ustinov\\
%ORCID: 0000-0002-0624-0406
%Pacific National University, Khabarovsk, Russia \&\\
%and\\
HSE University and\\
Khabarovsk Division of the Institute for Applied Mathematics,\\
Far-Eastern Branch,
Russian Academy of Sciences, Russia} %\\
%Bolshoy Karetny per. 19, bld. 1, Moscow, 127994, Russian Federation
\\
\texttt{Ustinov.Alexey\,@\,gmail$\cdot $com} \quad
\url{https://www.hse.ru/en/org/persons/530309935}

\endanonymizelong

\end{document}